\documentclass{article}
\usepackage{arxiv}

% 常用包
\usepackage[utf8]{inputenc}
\usepackage[T1]{fontenc}
\usepackage{hyperref}
\usepackage{amsmath,amssymb,mathtools}
\usepackage{mathrsfs}
\usepackage{algorithm}
\usepackage{algorithmic}
\usepackage{booktabs}
\usepackage{amsthm}
\usepackage{xcolor}

\usepackage{authblk}

% 数学环境
\newtheorem{theorem}{Theorem}
\newtheorem{lemma}{Lemma}

\newtheorem{definition}{Definition}
\newtheorem{remark}{Remark}

% 自定义命令
\DeclareMathOperator{\sat}{sat}
\newcommand{\aiping}[1]{\textcolor{black}{#1}}

% ---- Pretty single-column Keywords under abstract ----
\usepackage{etoolbox} % 安全起见
\newenvironment{prettykeywords}{%
  \par\vspace{0.4em}%
  \noindent\small % 小号字体更紧凑
  \begin{minipage}{0.97\textwidth}%
    \setlength{\parindent}{0pt}%
    \raggedright
    \hangindent=2em \hangafter=1 % 悬挂缩进
    \textbf{\textit{Keywords}}\textit{:}\ %
}{%
  \end{minipage}\par\vspace{0.6em}%
}

% 去掉日期
\date{}
\makeatletter
\renewcommand{\@date}{}
\makeatother

% -------------------------
% 标题 & 作者
% -------------------------
\title{Adaptive Event-Triggered MPC for Linear Parameter-Varying Systems with State Delays, Actuator Saturation and Disturbances}

\author[a]{Aiping Zhong}
\author[b]{Wanlin Lu}
\author[a,*]{Langwen Zhang}
\author[a]{Ziyang Bao}

\affil[a]{School of Automation Science and Engineering, South China University of Technology, Tianhe, Guangzhou, 510641, Guangdong, China}
\affil[b]{School of Engineering, The Hong Kong University of Science and Technology, Hong Kong, China}

\affil[*]{Corresponding author: \texttt{aulwzhang@scut.edu.cn}}

\begin{document}
\twocolumn[
\maketitle
\begin{abstract}
This paper proposes a unified adaptive event-triggered model predictive control (ETMPC) scheme for linear parameter-varying (LPV) systems subject to state delays, actuator saturation, and external disturbances. In existing studies, only a limited number of ETMPC methods have attempted to address either state delays or actuator saturation, and even these few methods typically lack co-design optimization between adaptive event-triggering mechanisms and the control law. To overcome these limitations, this paper presents a \aiping{Lyapunov-Krasovskii}-based adaptive ETMPC strategy that enables the co-design optimization of both the triggering mechanism and the controller. Specifically, the event-triggering parameter matrix is adaptively optimized by embedding an internal adaptive variable within the \aiping{Lyapunov-Krasovskii}-like function. Furthermore, the actuator saturation nonlinearity is transformed into a convex hull representation. The infinite-horizon robust optimization problem is reformulated as a convex optimization problem with linear matrix inequality (LMI) constraints. Invariant set constraints are introduced to ensure recursive feasibility, and mean-square input-to-state stability (ISS) under multiple uncertainties is rigorously established. Simulations on an industrial electric heating system validate the proposed method's effectiveness in reducing communication load.
\end{abstract}
\begin{prettykeywords}
Model predictive control; adaptive event-triggering mechanism; linear parameter-varying; state delays; actuator saturation
\end{prettykeywords}
]

%% Add \usepackage{lineno} before \begin{document} and uncomment 
%% following line to enable line numbers
%% \linenumbers

%% main text
%%

%% Use \section commands to start a section
\section{Introduction}
Model predictive control (MPC) is an advanced control strategy that generates control actions by iteratively solving a optimal control problem (OCP) online. Its capability to explicitly handle multiple constraints has made it attractive.
% Consequently, MPC has been widely applied in a variety of industrial processes and cyber-physical systems (CPSs)~\cite{he2024pid, yang2025resilience, franze2023cyber}. 
However, in practical applications, systems are inevitably affected by parameter uncertainties. %and persistent external disturbances
% These challenges have led to the development of robust model predictive control (RMPC)~\cite{bemporad2007robust}, which aims to ensure stability and control performance in the presence of such uncertainties.
Specifically, in many industrial systems, the dynamic often varies with operating conditions such as temperature or speed. To capture this variability, linear parameter-varying (LPV) systems with polytopic uncertainties provide an effective modeling framework~\cite{han2021robust}. The key advantage of LPV models is that nonlinear dynamics can be represented as a collection of linear models whose parameters vary with measurable scheduling variables, without requiring system linearization~\cite{sezame2013robust}. Meanwhile, time delays caused by network transmission or physical transport introduce fundamental challenges to the design of MPC~\cite{wang2025advancing}. The presence of such delays implies that the system's future evolution depends not only on its current state but also on a sequence of past states, which significantly increases the complexity of designing control for LPV systems.

Still, the control of time-delay systems becomes significantly more challenging in the presence of actuator saturation, which is a common phenomenon in practical control applications~\cite{desouza2018iss}. Actuator  saturation can limit control signal transmission, potentially degrading system performance and even endangering closed-loop stability~\cite{cao2005minmax}. Within the optimization framework of MPC, an effective strategy to handle this nonlinear constraint is to represent the saturated input as a convex combination of the vertices of the admissible input set~\cite{cao2005minmax, zhang2021distributed, zhang2021event, wu2024robust}. While this transformation is elegant, it substantially increases the number of linear matrix inequality (LMI) constraints. This, in turn, imposes a significant computational burden on the online implementation of the MPC controller.

To reduce the computational burden of online optimization in MPC, we introduce an adaptive event-triggered mechanism (ETM) into the MPC design, resulting in the so-called event-triggered MPC (ETMPC). The adaptive ETM improves the decision-making process of MPC updates by incorporating an internal dynamic variable, allowing for more intelligent and smoother triggering.
In recent years, various ETM strategies have been integrated into MPC frameworks, including static ETMPC~\cite{zhang2021event, he2024pid, zou2019event, hashimoto2025learning}, adaptive ETMPC~\cite{zhan2019adaptive, wang2022robust, li2023adaptive, yuan2023adaptive}, compound ETMPC~\cite{kang2022compound}, and periodic ETMPC~\cite{wang2023periodic,wang2023periodic2, wang2024periodic, resmi2023distributed}. Among them, static ETMPC is typically categorized into approaches based on state error~\cite{zhang2021event, he2024pid} and those based on cost function error~\cite{zou2019event, hashimoto2025learning}. Adaptive ETMPC~\cite{zhan2019adaptive, wang2022robust, li2023adaptive, yuan2023adaptive} extends static ETMPC by enabling adaptive adjustment of the triggering threshold. Compound ETMPC~\cite{kang2022compound} further enhances static or adaptive ETMPC by incorporating a minimum inter-event time condition. Periodic ETMPC~\cite{wang2023periodic,wang2023periodic2, wang2024periodic, resmi2023distributed} explores ETMPC strategies under periodic monitoring schemes. 
% ETMPC has been applied in a wide range of domains, including Markov jump systems (MJS)~\cite{shi2021dynamic}, multi-agent systems (MAS)~\cite{qin2023event}, vehicle platooning~\cite{chen2022distributed}, and CPS~\cite{he2024pid}.

Despite notable advances in ETMPC, few studies address its integration with actuator saturation or time-delay systems, and existing works still show certain limitations. For example,~\cite{zhang2021event}, the only work combining ETMPC with actuator saturation, has limitations: 1) its static ETM with fixed threshold lacks adaptability for complex dynamics; 2) it does not account for external disturbances. This omission is significant, as robustness against persistent disturbances is a crucial benchmark for any ETM strategy, and all other ETMPC studies cited in this paper have considered the disturbances. 
Similarly, the only study in~\cite{li2023adaptive}, combining ETMPC with time-delay systems, has shortcomings: 1) its adaptive ETM, relying solely on the most recent triggering information, is memoryless; 2) its sign function-based switching logic risks threshold chattering, especially near marginal stability; 3) the triggering condition is treated as an external constraint, preventing co-design with the controller. This limitation confines the framework to guaranteeing only $H_\infty$ performance. In contrast, input-to-state stability (ISS) offers a deeper and more comprehensive guarantee of robustness under persistent disturbances.

In conclusion, there remains a critical need for a unified control framework to handle LPV systems with polytopic uncertainties, state delays, and actuator saturation. Such a framework should integrate adaptive ETM into robust MPC design and provide rigorous ISS guarantees via an integrated co-design approach. To address this gap, this paper proposes a novel adaptive ETMPC scheme with the following key contributions:
\begin{itemize}
    \item A unified adaptive ETMPC framework is proposed for LPV systems with disturbances, state delays, and actuator saturation. An adaptive ETM is presented for MPC, wherein the triggering threshold is dynamically adjusted via an internal adaptive variable.
    \item To systematically address these coupled challenges, a co-design method is developed by embedding the ETM into a tailored \aiping{Lyapunov-Krasovskii}-like function, thereby enabling the joint optimization of the controller{,} auxiliary matrices for saturation, and triggering parameters.
    \item Invariant set constraints are introduced to guarantee recursive feasibility of the online optimization problem. Furthermore, it is proved that the closed-loop system achieves mean-square ISS under multiple uncertainties.
\end{itemize}

The remainder of this paper is organized as follows. Section~\ref{Sec2} presents the considered system model, introduces the adaptive ETM and {formulates} the robust {MPC} problem. Section~\ref{Sec3} provides the complete design procedure of the adaptive ETMPC controller. Section~\ref{Sec4} demonstrates the effectiveness of the proposed method through a simulation case study on an industrial electric heater system.

\textit{Notation:} The set of real numbers is denoted by $\mathbb{R}$. The symbols $\mathbb{R}^{n_x}$ and $\mathbb{R}^{n_x \times n_u}$ represent the set of $n_x$-dimensional real vectors and $n_x \times n_u$ real matrices, respectively. The identity matrix of appropriate dimensions is denoted by $I$. For a vector $x$ and a positive definite matrix $M$, the norm $\|x\|_M$ is defined as $\sqrt{x^\top M x}$. The operator $\operatorname{blkdiag}\{\cdot\}$ forms a block-diagonal matrix. The convex hull of a set of matrices is denoted by $\text{Co}\{\cdot\}$. An asterisk ($\ast$) in a symmetric matrix represents a term that is induced by symmetry.

\vspace{-2ex}
\section{Preliminaries and Problem Formulations }\label{Sec2}
In this section, we first present the LPV system dynamics. Next, we introduce an adaptive event-triggered mechanism along with a min-max MPC formulation. Finally, we summarize the key challenges that must be addressed in the subsequent controller design.
\subsection{Disturbed LPV system dynamics with delays and saturations}
In this work, we consider a global polytopic uncertain discrete-time LPV system with state delays, actual saturation and external disturbances,  described as
\begin{equation}
x_{k+1}=A(k)x_k+\sum_{\rho=1}^l\tilde{A}_\rho(k)x_{k-\tau_\rho}+B(k)\sigma (u_k)+D\omega_k.
\label{model1}
\end{equation}
The vectors $x_k$ and $x_{k-\tau_\rho} \in \mathbb{R}^{n_x}$ denote the current and delayed system states, respectively, where $\tau_\rho$ is a known integer representing the state delay, with $\rho = 1, \ldots, l$. The matrices $A(k)$ and $\tilde{A}_\rho(k) \in \mathbb{R}^{n_x \times n_x}$ are the system and delayed dynamics matrices, respectively. The vector $\sigma(u_k) \in \mathbb{R}^{n_u}$ represents the saturated input, and $B(k) \in \mathbb{R}^{n_x \times n_u}$ is the input matrix. The saturated input $\sigma(u_k) \in \mathbb{R}^{n_u}$ is defined component-wise as $\sigma_i(u_i(k)) = \operatorname{sign}(u_i(k)) \cdot \min\{u_{\sat}, |u_i(k)|\}$, constraining each control input within $[-u_{\sat}, u_{\sat}]$.
The vector $\omega_k \in \mathbb{R}^{n_\omega}$ denotes a persistent disturbance, while $D \in \mathbb{R}^{n_x \times n_\omega}$ is a constant disturbance matrix. Bounded disturbance $\omega_k$ satisfies 
\begin{equation}
    \omega_k \in \mathcal{W}\overset{def}{=} \left\{\omega|\omega^\top \omega\le d^2\right\}, \label{disturbance_bound}
\end{equation}
where $d > 0$ is a known constant.

It is assumed that the system matrices are fully known and can be expressed as affine functions of a time-varying parameter vector $\alpha(k) = [\alpha_1(k), \alpha_2(k), \dots, \alpha_L(k)]$. Specifically, the matrices satisfy
\begin{equation}
\begin{aligned}
    &[A(k),\ \tilde{A}_1(k),\ \cdots,\ \tilde{A}_l(k),\ B(k)] \\
    &= \sum_{v=1}^{L} \alpha_v(k) 
    [A^v,\ \tilde{A}_1^v,\ \cdots,\ \tilde{A}_l^v,\ B^v],
\end{aligned}
\label{para_vary}
\end{equation}
where each weighting coefficient $\alpha_v(k)$ satisfies $0 \le \alpha_v(k) \le 1$ and $\sum_{v=1}^{L} \alpha_v(k) = 1$. The parameter-dependent system matrices belong to a known convex polytope $\mathcal{D}$, which is spanned by $L$ vertex matrices as
\begin{equation}
\begin{aligned}
&\big[A(k),\,\tilde A_1(k),\,\ldots,\,\tilde A_\ell(k),\,B(k)\big] \in \mathcal D,\\
&\mathcal D \coloneqq \operatorname{co}\Big\{ \big[A^1,\tilde A_1^1,\ldots,\tilde A_\ell^1,B^1\big],\,\ldots,\\
&\qquad\qquad \big[A^L,\tilde A_1^L,\ldots,\tilde A_\ell^L,B^L\big] \Big\}.
\end{aligned}
\label{convex_hull}
\end{equation}
% where $\text{Co}\{\cdot\}$ denotes the convex hull of $L$ vertex matrices.

In addition, we consider a linear state feedback law given by $u_k = F_k x_k$. To facilitate the analysis under actuator saturation, the saturated feedback $\sigma(u_k)$ can be represented as an element within a convex combination of a set of auxiliary linear feedback laws. This reformulation relies on the following two lemmas.
\begin{lemma}[\cite{hindi1998analysis}, Lemma~$1$]    \label{Lemma_1}
    Given $F \in \mathbb{R}^{n_u \times n_x}$, define the set $ \psi(F) = \left\{ x \in \mathbb{R}^{n_x} \,\middle|\, |f_i^\top  x| \le u_{\sat},i = 1, \ldots, n_u \right\}$, where $f_i^\top $ denotes the $i^{th}$ row of $F$. Let $P \in \mathbb{R}^{n_x \times n_x}$ be a positive definite matrix and $\gamma > 0$ be a scalar. Then, the ellipsoidal set  $\Omega(P, \gamma) \coloneqq \left\{ x \in \mathbb{R}^{n_x} \,\middle|\, x^\top  P x \le \gamma \right\}$ is contained within $\psi(F)$ if and only if     \begin{equation}
        f_i^\top  (\frac{P}{\gamma})^{-1} f_i \le u_{\sat}^2,\quad i = 1, \ldots, n_u. \label{sat_lem_eq}
    \end{equation}
\end{lemma}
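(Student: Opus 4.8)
The plan is to reduce the set inclusion $\Omega(P,\gamma) \subseteq \psi(F)$ to $n_u$ independent scalar maximization problems and solve each in closed form. Since $\psi(F) = \bigcap_{i=1}^{n_u} \{x : |f_i^\top x| \le u_{\sat}\}$, the inclusion holds if and only if, for every $i$, $\max_{x \in \Omega(P,\gamma)} |f_i^\top x| \le u_{\sat}$; and because $\Omega(P,\gamma)$ is symmetric about the origin, this maximum equals the support function $\max_{x \in \Omega(P,\gamma)} f_i^\top x$ of the ellipsoid in the direction $f_i$. So everything reduces to evaluating that support function.

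To evaluate it I would substitute $z = \gamma^{-1/2} P^{1/2} x$, using the symmetric positive-definite square root $P^{1/2}$; this maps $\Omega(P,\gamma)$ bijectively onto the closed unit ball $\{z : z^\top z \le 1\}$ and gives $f_i^\top x = \sqrt{\gamma}\,(P^{-1/2} f_i)^\top z$. By Cauchy--Schwarz, $\max_{\|z\|\le 1}(P^{-1/2} f_i)^\top z = \|P^{-1/2} f_i\| = \sqrt{f_i^\top P^{-1} f_i}$, attained at $z^\star = P^{-1/2} f_i / \|P^{-1/2} f_i\|$ when $f_i \ne 0$. Hence
\begin{equation*}
\max_{x \in \Omega(P,\gamma)} |f_i^\top x| \;=\; \sqrt{f_i^\top (P/\gamma)^{-1} f_i},
\end{equation*}
which also equals $\sqrt{\gamma\, f_i^\top P^{-1} f_i}$ since $(P/\gamma)^{-1} = \gamma P^{-1}$. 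Combining with the first paragraph, $\Omega(P,\gamma)\subseteq\psi(F)$ holds if and only if $\sqrt{f_i^\top (P/\gamma)^{-1} f_i} \le u_{\sat}$ for all $i$, which is exactly \eqref{sat_lem_eq} after squaring.

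I do not expect a genuine obstacle: the argument is elementary convex analysis. The only points needing a little care are (i) that the maximum over the compact ellipsoid is actually attained, so that the inclusion is equivalent to the non-strict inequality in \eqref{sat_lem_eq} rather than a strict one, and (ii) the degenerate row $f_i = 0$, for which the slab constraint is vacuous and $f_i^\top (P/\gamma)^{-1} f_i = 0 \le u_{\sat}^2$ holds trivially, so both sides of the equivalence are automatically satisfied. If one prefers to remain in LMI language, an equivalent derivation applies the S-procedure to the implication $x^\top P x \le \gamma \Rightarrow x^\top f_i f_i^\top x \le u_{\sat}^2$ (the S-procedure being lossless here because there is a single quadratic constraint), followed by a Schur complement; this yields the same inequality. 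The support-function computation above, however, is the most transparent route.
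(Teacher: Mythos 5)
The paper does not prove this lemma; it is imported verbatim from the cited reference (Hindi and Boyd, Lemma~1), so there is no in-paper argument to compare against. Your proof is correct and complete: reducing the inclusion to the $n_u$ support-function evaluations $\max_{x\in\Omega(P,\gamma)}|f_i^\top x|=\sqrt{\gamma\,f_i^\top P^{-1}f_i}=\sqrt{f_i^\top(P/\gamma)^{-1}f_i}$ via the change of variables $z=\gamma^{-1/2}P^{1/2}x$ and Cauchy--Schwarz is exactly the standard route, and it delivers both directions of the equivalence at once since the maximum is attained on the compact ellipsoid. Your attention to the degenerate row $f_i=0$ and to why the inequality is non-strict is appropriate, and the alternative S-procedure/Schur-complement derivation you sketch is the form in which this condition is usually deployed as an LMI (as the paper does in Lemma~4).
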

\begin{lemma}[\cite{hu2002analysis}]\label{Lemma_2}
    Let $F, H \in \mathbb{R}^{n_u \times n_x}$ be two given matrices. Suppose the condition     \begin{equation}
            |h_i^\top  x| \le u_{\sat}  \label{H_LMI_orignal}
    \end{equation}
holds for all $i = 1, \ldots, n_u$, where $h_i^\top $ denotes the $i^{th}$ row of~$H$. Then, the saturated control input $\sigma(Fx)$ can be represented as a term belonging to a convex hull of auxiliary linear feedback, given by
    \begin{equation}
        \sigma(Fx) \in \text{Co} \left\{ E_\eta Fx + E_\eta^{-} Hx \;\middle|\; \eta = 1, \ldots, 2^{n_u} \right\}, 
    \end{equation}
For each $\eta$, the matrix $E_\eta \in \mathbb{R}^{n_u \times n_u}$ is a diagonal matrix with diagonal entries either $0$ or $1$, and $E_\eta^{-} = I - E_\eta$.
\end{lemma}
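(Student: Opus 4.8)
The plan is to prove the lemma \emph{coordinate-wise} and then glue the scalar convex combinations into a single convex combination over the $2^{n_u}$ vertices by a multilinear (product) weighting. First I would fix an arbitrary $x$ and write $u = Fx$ and $v = Hx \in \mathbb{R}^{n_u}$; by the standing hypothesis~\eqref{H_LMI_orignal} we have $|v_i| \le u_{\sat}$ for every $i = 1,\dots,n_u$. The goal is then to exhibit weights $\lambda_\eta = \lambda_\eta(x) \ge 0$ with $\sum_{\eta=1}^{2^{n_u}} \lambda_\eta = 1$ such that $\sigma(u) = \sum_{\eta} \lambda_\eta\big(E_\eta u + E_\eta^{-} v\big)$, because the right-hand side is manifestly a point of the claimed convex hull once $u$ and $v$ are substituted back.

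The first step is the scalar claim: for each coordinate $i$ there exists $\mu_i = \mu_i(x) \in [0,1]$ with $\sigma_i(u_i) = \mu_i u_i + (1-\mu_i) v_i$. If $|u_i| \le u_{\sat}$ then $\sigma_i(u_i) = u_i$ and $\mu_i = 1$ works; if $u_i > u_{\sat}$ then $\sigma_i(u_i) = u_{\sat}$, and since $v_i \le u_{\sat} < u_i$ the value $u_{\sat}$ lies on the segment $[v_i, u_i]$, so some $\mu_i \in [0,1)$ works; the case $u_i < -u_{\sat}$ is symmetric, using $v_i \ge -u_{\sat} > u_i$. This is precisely where the hypothesis $|h_i^\top x| \le u_{\sat}$ enters — it forces $v_i$ to lie on the correct side of the saturation level so that $\sigma_i(u_i)$ is sandwiched between $u_i$ and $v_i$.

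The second step lifts these coordinatewise combinations to a single one over the vertices. Identify each index $\eta$ with a subset $S_\eta \subseteq \{1,\dots,n_u\}$ via $(E_\eta)_{ii} = 1 \iff i \in S_\eta$, and set $\lambda_\eta = \prod_{i \in S_\eta} \mu_i \prod_{i \notin S_\eta} (1-\mu_i)$. Then $\lambda_\eta \ge 0$, and $\sum_\eta \lambda_\eta = \prod_{i=1}^{n_u}\big(\mu_i + (1-\mu_i)\big) = 1$. Moreover the $i$-th component of $\sum_\eta \lambda_\eta\big(E_\eta u + E_\eta^{-} v\big)$ collapses, after summing over whether $i \in S_\eta$, to $\mu_i u_i + (1-\mu_i) v_i = \sigma_i(u_i)$, so the full vector identity holds. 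Substituting $u = Fx$, $v = Hx$ yields $\sigma(Fx) \in \text{Co}\{E_\eta Fx + E_\eta^{-} Hx \mid \eta = 1,\dots,2^{n_u}\}$. The only genuinely delicate point is the scalar step with its sign bookkeeping; the product-weight construction, although it is the heart of the argument, is routine once the scalar claim is established.
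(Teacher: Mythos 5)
Your proof is correct. The paper does not prove Lemma~\ref{Lemma_2} at all --- it is quoted verbatim from the cited reference \cite{hu2002analysis} --- and your argument (the scalar sandwich step $\sigma_i(u_i)=\mu_i u_i+(1-\mu_i)v_i$ enabled by $|v_i|\le u_{\sat}$, followed by the multilinear product weights $\lambda_\eta=\prod_{i\in S_\eta}\mu_i\prod_{i\notin S_\eta}(1-\mu_i)$) is precisely the standard proof given in that source, with the sign bookkeeping and the collapse of the $i$-th component both handled correctly.
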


Based on Lemma~\ref{Lemma_2}, the saturated $\sigma(u_k)$ is given by
\begin{equation}
   \sigma(u_k) = \sigma(F_kx_k) = \sum_{\eta = 1}^{2^{n_u}} \varrho_\eta ( E_{\eta,k} F_kx_k + E_{\eta,k}^{-} H_kx_k ),
    \label{saturation}
\end{equation}
where $0 \le \varrho_\eta \le 1$ and $\sum_{\eta = 1}^{2^{n_u}} \varrho_\eta = 1$ with $[E_{\eta,k}, E_{\eta,k}^{-}] \in \Pi$, $\Pi\coloneqq\text{Co}\left\{ [E_1, E_1^{-}], \ldots, [E_{2^{n_u}}, E_{2^{n_u}}^{-}] \right\}.$

\subsection{Adaptive event-triggered mechanism}
In order to effectively reduce the frequency of data packet transmissions to alleviate the communication burden, we first introduce an adaptive event-triggered {mechanism}. Specifically, let $\mathcal{S} = \{k_t \mid t \in \mathbb{N}\}$ denote the set of event-triggering instants, which collects all triggering time points in chronological order. 
Suppose an event is triggered at time $k_t$. Then, due to the zero-order hold (ZOH) mechanism, the control input is unchanged until the next triggering instant $k_{t+1}$, that is,
\[
u_k = u_{k_t}, \quad \forall\, k \in \{k_t, k_t+1, \ldots, k_{t+1}-1\}.
\]
The next triggering instant $k_{t+1}$ is determined by the following rule
\begin{equation}
    k_{t+1} \!=\! k_t + \underset{r \in \mathbb{N}}{\min}
    \left\{\! r \middle|\, \varepsilon ( \left\|e_{k_t+r} \right\|^2_{\Phi_{k_t}}\!\! - \theta \left\|x_{k_t+r}\right\|^2_{\Phi_{k_t}} ) \!> \!\beta_{k_t+r} \right\}, \label{adaptiveETM}
\end{equation}
where the error vector $e_k$ is defined as
\begin{equation}
    e_k \coloneqq x_{k_t} - x_k, \quad k \in \{k_t, k_t + 1, \ldots, k_{t+1} - 1\}, \label{error}
\end{equation}
which depends on the deviation of the current state $x_k$ from the state $x_{k_t}$ at the last triggering instant. The matrix ${\Phi_{k_t}} \in \mathbb{R}^{n_x \times n_x}$, called event-triggering parameter matrix, is a given positive definite matrix. The scalar $\beta_k$ represents an internal adaptive variable governed by the recursive update
\begin{equation}
    \beta_{k+1} = \mu \beta_k - ( \left\|e_{k} \right\|^2_{\Phi_{k_t}} - \theta \left\|x_{k}\right\|^2_{\Phi_{k_t}} ), \beta_0 \ge 0, \label{internal_adaptive_variable}
\end{equation}
where $\mu$, $\theta$, and $\varepsilon$ are given constants satisfying
\begin{equation}
    0 < \mu < 1,0 < \theta < 1,\varepsilon \ge 1/\mu. \label{para_constraints}
\end{equation}
According to the rule in~\eqref{adaptiveETM}, for all $k \in \mathbb{N} \setminus \mathcal{S}$ (i.e., during non-triggering intervals), the following inequality holds
\begin{equation}
    \varepsilon ( \left\|e_{k} \right\|^2_{\Phi_{k_t}} - \theta \left\|x_{k}\right\|^2_{\Phi_{k_t}} ) \le \beta_k. \label{time_sequnence2}
\end{equation}
Substituting~\eqref{time_sequnence2} into~\eqref{internal_adaptive_variable}, we obtain
\[
\beta_{k+1} \ge \mu \beta_k - \frac{1}{\varepsilon} \beta_k = ( \mu - \frac{1}{\varepsilon} ) \beta_k,
\]
which leads to the recursive lower bound
\[
\beta_k \ge ( \mu - \frac{1}{\varepsilon} )^k \beta_0.
\]
Since $\mu \ge 1/\varepsilon$ by assumption, it follows that $\beta_k \ge 0$ for all $k \in \mathbb{N}$.
\begin{remark}
The triggering condition in~\eqref{adaptiveETM} includes an internal adaptive variable $\beta_k$, corresponding to the adaptive ETM proposed in~\cite{Hu2016}. When the influence of $\beta_k$ becomes negligible (i.e., $\varepsilon \to \infty$), it reduces to the static ETM in \cite{Hu2016}, with a fixed zero-based threshold 
\begin{equation} 
    k_{t+1} = k_t + \underset{r \in \mathbb{N}}{\min} \left\{ r \,\middle|\, \left\|e_{k_t+r} \right\|^2_{\Phi_{k_t}} - \theta \left\|x_{k_t+r}\right\|^2_{\Phi_{k_t}} > 0 \right\},
    \label{staticETM}
\end{equation}
The key difference lies in the threshold. Static ETM uses a fixed value, while adaptive ETM employs a adaptive threshold $\beta_k/\varepsilon$, ensuring less frequent triggering and reduced {communication burden.} This adaptive ETM was integrated with MPC in~\cite{shi2021dynamic} for Markov jump systems.
\end{remark}

Based on~\eqref{saturation} and~\eqref{error}, the controller can be rewritten as
\begin{equation}
    \sigma(u_k) = \sum_{\eta = 1}^{2^{n_u}} \varrho_\eta ( E_\eta F_{k_t} + E_\eta^{-} H_{k_t} )(x_k + e_k).
    \label{saturation_ET}
\end{equation}
Then the closed-loop systems~\eqref{model1} can be rewritten as
\begin{equation}
\begin{aligned}
    x_{k+1}=&A(k)x_k+\sum_{\rho=1}^l\tilde{A}_\rho(k)x_{k-\tau_\rho}+\\
    &B(k)\sum_{\eta = 1}^{2^{n_u}} \varrho_\eta ( E_\eta F_{k_t} + E_\eta^{-} H_{k_t} )(x_k + e_k)+D\omega_k.
\end{aligned}
\label{model2}
\end{equation}
Due to the presence of persistent external disturbances $\omega_k$ in the system~\eqref{model2}, one of our primary objectives is to ensure that the designed controller guarantees mean-square ISS. The formal definition of mean-square ISS is given below.

\begin{definition}[\cite{song2018n}]\label{Definition_1}
    Define $\tilde{z}_{k} =  [x_{k}^\top,  x_{k-1}^\top,  \dots,  x_{k-\tau_l}^\top]^\top$. Given the initial state $\tilde{z}_0$ and an initial constant $\beta_0$, the closed-loop system of~\eqref{model2} and~\eqref{internal_adaptive_variable} is said to be mean-square ISS if there exist functions $\psi_1 \in \mathcal{KL}$ and $\psi_2 \in \mathcal{K}$ such that
    \begin{equation*}
        \Vert \tilde{z}_{k} \Vert^2_2 \leq \psi_1(\Vert \tilde{z}_{0} \Vert^2_2 + \beta_0,\, k) + \psi_2(\Vert \omega_k \Vert^2_\infty)
    \end{equation*}
holds for all $k \geq 0$.
\end{definition}

\subsection{MPC formulation}
After introducing the system dynamics and the adaptive event-triggered mechanism, we now formulate the MPC strategy. To handle the persistent disturbance $\omega_k$ in the system~\eqref{model2} while ensuring robustness, we adopt a robust MPC framework that minimizes the worst-case cost over an infinite horizon.
The robust MPC problem is formulated as follows
\begin{equation}
    \begin{aligned}
        &\min_{\{u_{k+h|k}\}_{h=0}^{\infty}} \; \max_{\substack{[A(\cdot), \tilde{A}_1(\cdot), \ldots, B(\cdot)] \in \mathcal{D}, \\ \omega(\cdot) \in \mathcal{W},\; \varrho_\eta(\cdot) \in [0,1], \sum\varrho_\eta=1}} J_\infty(k), \\
        &\text{subject to}~\eqref{model1}\text{ and}~\eqref{H_LMI_orignal},
    \end{aligned}\label{cost}
\end{equation}
where the min-max cost function is defined as
\begin{equation}
    J_\infty(k) = \sum_{h=0}^{\infty} ( \|x_{k+h|k}\|_Q^2 + \|u_{k+h|k}\|_R^2 - \varphi \|\omega_{k+h|k}\|_2^2 ). \label{cost_J}
\end{equation}
This cost function is designed to account for both system performance and robustness. The term $\varphi \|\omega\|^2$ penalizes the worst-case impact of persistent disturbances with a given constant $\varphi>0$, allowing the controller to hedge against uncertainties~\cite{teng2018robust}. Here, $Q \in \mathbb{R}^{n_x \times n_x}$ and $R \in \mathbb{R}^{n_u \times n_u}$ are given positive-definite weighting matrices, used to penalize deviations in the state and control input, respectively.

\begin{figure}
    \centering
    \includegraphics[width=0.25\textwidth]{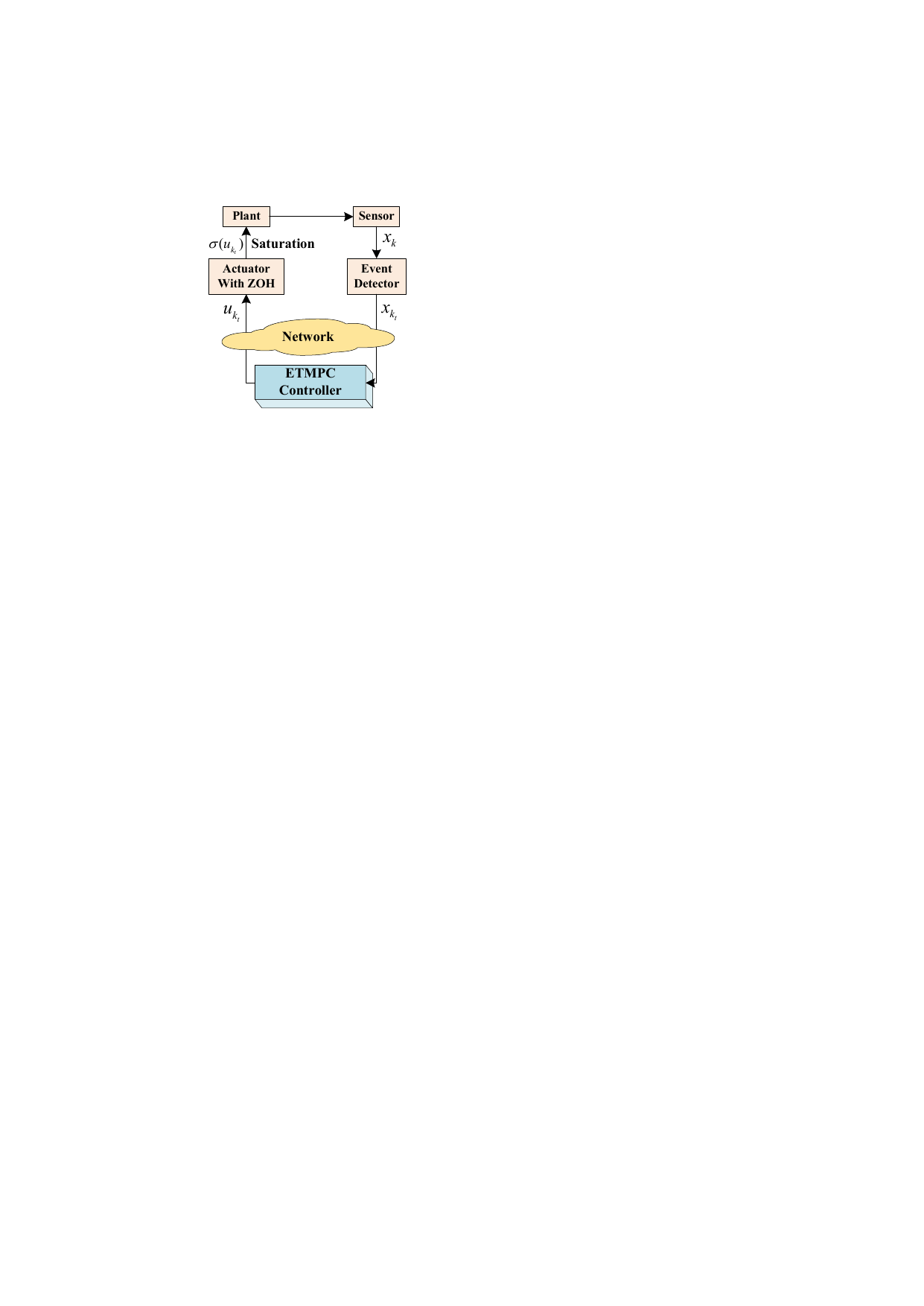}  
    \caption{Architecture of the proposed ETMPC control framework. The structure includes a physical plant~\eqref{model1} with state delays, disturbances and actuator saturation, an event-triggering mechanism based on~\eqref{adaptiveETM}, an ETMPC, and a ZOH actuator.}
    \label{fig:framework}
\end{figure}

The proposed control framework for the system in~\eqref{model2} is illustrated in Fig.~\ref{fig:framework}. The closed-loop architecture consists of a physical plant in~\eqref{model1}, an event detector for~\eqref{adaptiveETM}, an ETMPC controller, and a ZOH actuator with saturation. At each time instant, the {event detector} monitors the system state $x_k$ and decides whether to trigger a transmission. At each triggering instant, the ETMPC controller solves an optimization problem to compute a new control sequence. The resulting control input is then transmitted to the actuator. Otherwise, both control computation and communication are suspended. The ZOH holds this input constant until the next triggering event. The physical actuator implements this command subject to its saturation limits.

\vspace{-1ex}
\subsection{Problem formulations}
Based on the system model and the control framework above, the design of the adaptive ETMPC scheme gives rise to the following key questions:
\begin{enumerate}
    \item How can the min-max optimization problem in~\eqref{cost}, which cannot be solved directly due to its infinite horizon and the combined uncertainties of polytopic uncertainty in~\eqref{para_vary}, external disturbances $\omega$, and saturation $\sigma(\cdot)$, be reformulated into a tractable set of LMIs? \label{prb1}

    \item How can a unified \aiping{Lyapunov-Krasovskii}-like function be constructed to simultaneously stabilize the system in the presence of state delays and embed the ETM's internal adaptive variable $\beta_k$? Furthermore, how can this function serve as the basis for a true co-design framework that jointly optimizes the control gain $F_{k_t}$ and the triggering matrix $\Phi_{k_t}$? \label{prb2}

    \item How can we rigorously guarantee the performance of the resulting closed-loop system in~\eqref{model2} by proving recursive feasibility and mean-square ISS as defined in Definition~\ref{Definition_1}? \label{prb3}
\end{enumerate}

We address the above questions in the subsequent sections. Section~\ref{Sec3.1} focuses on Problems~\ref{prb1} and~\ref{prb2}; Sections~\ref{Sec3.2} and~\ref{Sec3.3} address Problem~\ref{prb3}; and Section~\ref{Sec4} validates the proposed method via simulation.

\vspace{-2ex}
\section{Adaptive ETMPC Controller Design} \label{Sec3}
In this section, we present the design of the proposed adaptive ETMPC controller. We reformulate the robust control problem into a convex optimization by deriving a set of LMIs. We then prove that the resulting controller guarantees recursive feasibility and mean-square ISS for the closed-loop system.

\vspace{-1ex}
\subsection{Upper bound of infinite-horizon $J_\infty$} \label{Sec3.1}
Since the cost function $J_\infty$ in~\eqref{cost} is defined over an infinite horizon and involves uncertainty caused by the disturbance $\omega_k$, it cannot be solved directly. To address this issue, we define the \aiping{Lyapunov-Krasovskii}-like function as
\begin{equation}
    \begin{aligned}
        V_{k+h|k} \!=& x_{k+h|k}^\top  P_k x_{k+h|k} \!
        +\!\!\! \sum_{\rho=1}^{\tau_1} x_{k+h-\rho|k}^\top  P_{k,\tau_1} x_{k+h-\rho|k} \\
     + \cdots + &\sum_{\rho=\tau_{l-1}+1}^{\tau_l} x_{k+h-\rho|k}^\top  P_{k,\tau_l} x_{k+h-\rho|k}
        + \beta_{k+h|k} \\
        =\; & \tilde{z}_{k+h|k}^\top  \tilde{P}_k \tilde{z}_{k+h|k} + \beta_{k+h|k}, h\ge 0,
    \end{aligned}
    \label{quadratic_function_1}
\end{equation}
where $\tilde{z}_{k+h|k} =  [x_{k+h|k}^\top,  x_{k+h-1|k}^\top,  \dots,  x_{k+h-\tau_l|k}^\top]^\top$ is the augmented state vector and the subscript ``$|k$'' indicates the prediction made at time step $k$. The corresponding block-diagonal matrix $\tilde{P}_k$ is 
\[\tilde{P}_k = \operatorname{blkdiag}\{ P_k, \underbrace{P_{k,\tau_1}, \ldots, P_{k,\tau_1}}_{\tau_1}, \ldots, \underbrace{P_{k,\tau_l}, \ldots, P_{k,\tau_l}}_{(\tau_l - \tau_{l-1})}\}.\]
Unlike traditional \aiping{Lyapunov-Krasovskii} function for LPV systems with state delays~\cite{jeong2005constrained, zhong2025resilient}, which typically consider only the state-dependent term $\tilde{z}_{k+h|k}^\top  \tilde{P}_k \tilde{z}_{k+h|k}$, our formulation in~\eqref{quadratic_function_1} further incorporates the internal adaptive variable $\beta_{k+h|k}$ introduced by the ETM.
We now propose the first condition for the \aiping{Lyapunov-Krasovskii}-like function {in}~\eqref{quadratic_function_1}, given by
\begin{equation}
\begin{aligned}
    V_{k+h+1|k} - V_{k+h|k}
    &\le -( \|x_{k+h|k}\|_Q^2 + \|u_{k+h|k}\|_R^2 \\
    &\quad\;\; - \varphi \|\omega_{k+h|k}\|^2_2 ),
\end{aligned}
\label{condition_1}
\end{equation}
By summing~\eqref{condition_1} over $h = 0$ to $h = \infty$, we obtain
\begin{equation}
    J_\infty(k) \le V_{k|k}.
\end{equation}
Next, we define the second upper bound condition for function $V_{k|k}$ in~\eqref{quadratic_function_1} as 
\begin{equation}
    V_{k|k} \le \gamma_k. \label{condition_2}
\end{equation}
Then, based on the two conditions in~\eqref{condition_1} and~\eqref{condition_2}, we can derive the following inequality
\begin{equation}
    J_\infty(k) \le \gamma_k,
\end{equation}
which transforms the original problem of directly minimizing the cost function $J_\infty(k)$ in~\eqref{cost} into an equivalent problem of minimizing the upper bound $\gamma_k$. 
To simplify the notation without loss of generality in the derivation methodology, we consider the case of a single state delay for the remainder of this section. 
In this case, we define $
\overline{z}_{k+h|k} = [x_{k+h|k}^\top,  x_{k+h-1|k}^\top,  \dots,  x_{k+h-\tau_1|k}^\top]^\top$ and $\overline{P}_k = \operatorname{blkdiag}\{ P_k, P_{k,\tau_1}, \ldots, P_{k,\tau_1}\}$. Then the \aiping{Lyapunov-Krasovskii}-like function can be redefined as
\begin{equation}
    \begin{aligned}
        V_{k+h|k} =\; & x_{k+h|k}^\top  P_k x_{k+h|k} 
        +  \\
        &\sum_{\rho=1}^{\tau_1} x_{k+h-\rho|k}^\top  P_{k,\tau_1} x_{k+h-\rho|k}+ \beta_{k+h|k}\\
        =\; & \overline{z}_{k+h|k}^\top  \overline{P}_k \overline{z}_{k+h|k} + \beta_{k+h|k}.
    \end{aligned}
    \label{quadratic_function_2}
\end{equation}

To enforce both conditions~\eqref{condition_1} and~\eqref{condition_2}, we now propose Lemma~\ref{Lemma_3}. 
\begin{lemma}\label{Lemma_3}
Consider the closed-loop system in~\eqref{model2}. Let $Q$, $R$, and $\varphi$ be given matrices and scalar in~\eqref{cost_J}, respectively, and let $\theta$ be the given scalar in~\eqref{para_constraints}.
If there exist matrices $W_k$, $W_{k,\tau_1}$, $Y_{4,k}$, $Y_{5,k}$, positive definite matrices $Y_{1,k}$, $Y_{2,k}$, $Y_{3,k}$, and a scalar $\gamma_k > 0$, such that the following LMIs 
    \begin{equation}
        \begin{bmatrix}
            \Theta_1 & \ast\\
            \Theta_2 & \Theta_3
        \end{bmatrix} \ge 0, \label{LMI1}
    \end{equation} and 
            		\begin{equation}
			\begin{bmatrix}
				1&\overline{z}&\beta^{0.5}_{k}\\
				
				\ast&\Theta_4&0\\
				
				\ast&\ast&\gamma_k\\
			
			\end{bmatrix}\ge0,\label{LMI2}
		\end{equation}
   hold, where
\begin{align}
    \Theta_1 &= \operatorname{blkdiag} \left\{
        W_k + W_k^\top - Y_{1,k},\;
        W_{k,\tau_1} + W_{k,\tau_1}^\top - Y_{2,k},\right. \nonumber \\
        &\hspace{4em} \left. W_k + W_k^\top - Y_{3,k},\;
        \varphi \gamma_k I
    \right\}, \nonumber \\
    \Theta_2 &= \begin{bmatrix}
        A^v W_k + B^v \mathscr{N}_k & \tilde{A}_1^v W_{k,\tau_1} & B^v \mathscr{N}_k & \gamma_k D \\
        W_k & 0 & 0 & 0 \\
        Q^{0.5} W_k & 0 & 0 & 0 \\
        R^{0.5} \mathscr{N}_k & 0 & R^{0.5} \mathscr{N}_k & 0 \\
        \theta^{0.5} W_k & 0 & 0 & 0
    \end{bmatrix}, \nonumber \\
    \Theta_3 &= \operatorname{blkdiag} \left\{ Y_{1,k},\; Y_{2,k},\; \gamma_k I,\; \gamma_k I,\; Y_{3,k} \right\}, \nonumber \\
    \Theta_4 &= \operatorname{blkdiag} \left\{ Y_{1,k},\; Y_{2,k},\; \dots,\; Y_{2,k} \right\}, \nonumber \\
    \mathscr{N}_k &= E_\eta Y_{4,k} + E_\eta^{-} Y_{5,k}, \quad v = 1, \ldots, L. \label{Nk}
\end{align}
then the conditions~\eqref{condition_1} and~\eqref{condition_2} hold, where
   $ P_k = \gamma_k Y_{1,k}^{-1}, P_{k,\tau_1} = \gamma_k Y_{2,k}^{-1}, {\Phi_k} = \gamma_k Y_{3,k}^{-1},  F_k = Y_{4,k} W_k^{-1}, H_k = Y_{5,k} W_k^{-1}.$
\end{lemma}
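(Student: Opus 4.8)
The plan is to establish the two desired conditions~\eqref{condition_1} and~\eqref{condition_2} separately, each by a Schur-complement argument applied to the respective LMI. The starting point is the parameter-dependent closed-loop dynamics~\eqref{model2} with the saturated input expanded via~\eqref{saturation_ET}; because all weights $\alpha_v$ and $\varrho_\eta$ are convex and appear affinely, it suffices to verify the matrix inequalities at the polytope vertices $v=1,\dots,L$ (with the saturation vertices folded into $\mathscr{N}_k = E_\eta Y_{4,k} + E_\eta^- Y_{5,k}$), which is exactly how~\eqref{LMI1} is indexed. First I would write out $V_{k+h+1|k} - V_{k+h|k}$ using the block-diagonal $\overline{P}_k$ in~\eqref{quadratic_function_2}: the ``head'' block contributes $x_{k+h+1|k}^\top P_k x_{k+h+1|k} - x_{k+h|k}^\top P_k x_{k+h|k}$, the delay chain telescopes so that only the newest term $x_{k+h|k}^\top P_{k,\tau_1} x_{k+h|k}$ and the oldest term $-x_{k+h-\tau_1|k}^\top P_{k,\tau_1} x_{k+h-\tau_1|k}$ survive, and the adaptive-variable increment $\beta_{k+h+1|k} - \beta_{k+h|k}$ is replaced, via the update law~\eqref{internal_adaptive_variable}, by $(\mu-1)\beta_{k+h|k} - (\|e\|_{\Phi_k}^2 - \theta\|x\|_{\Phi_k}^2)$. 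Dropping the harmless $(\mu-1)\beta \le 0$ term, condition~\eqref{condition_1} becomes a quadratic form in the stacked vector $(x_{k+h|k}, x_{k+h-\tau_1|k}, e_{k+h|k}, \omega_{k+h|k})$ being negative semidefinite.

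Next I would substitute $x_{k+h+1|k}$ from~\eqref{model2} into that quadratic form and collect it as a single symmetric matrix inequality. The key algebraic move is the standard Kothare-type congruence: introduce slack variables $W_k$ for the controller-gain change of variables ($Y_{4,k}=F_k W_k$, $Y_{5,k}=H_k W_k$), set $Y_{1,k}=\gamma_k P_k^{-1}$, $Y_{2,k}=\gamma_k P_{k,\tau_1}^{-1}$, $Y_{3,k}=\gamma_k \Phi_k^{-1}$, multiply the inequality left and right by $\operatorname{blkdiag}\{W_k, W_{k,\tau_1}, W_k, I\}$ and its transpose, and then apply the Schur complement to pull the quadratic ``$x_{k+h+1|k}^\top P_k x_{k+h+1|k}$'' term, the stage cost terms $\|x\|_Q^2$, $\|u\|_R^2$, and the $\theta$-weighted triggering term into the lower block $\Theta_2,\Theta_3$. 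The relaxation $W_k + W_k^\top - Y_{i,k} \le W_k^\top Y_{i,k}^{-1} W_k$ (valid since $Y_{i,k}>0$) is what converts the exact but nonconvex products $W_k^\top Y_{i,k}^{-1} W_k$ in the diagonal into the affine expressions $W_k + W_k^\top - Y_{i,k}$ appearing in $\Theta_1$; this is the step that makes the condition an honest LMI. After this congruence and Schur complement, nonnegativity of the block matrix in~\eqref{LMI1} is equivalent to~\eqref{condition_1} holding at each vertex, hence for the whole polytope by convexity. Condition~\eqref{condition_2}, namely $\overline{z}_{k|k}^\top \overline{P}_k \overline{z}_{k|k} + \beta_k \le \gamma_k$, is the easier part: writing $\overline{P}_k = \gamma_k \operatorname{blkdiag}\{Y_{1,k}^{-1}, Y_{2,k}^{-1},\dots,Y_{2,k}^{-1}\}^{-1}\cdot\gamma_k^{-1}$… more simply, $\overline{z}^\top \overline{P}_k \overline{z} \le \gamma_k - \beta_k$ with $\overline{P}_k = \gamma_k \Theta_4^{-1}$ is, after dividing by $\gamma_k$ and applying Schur complement twice (once on the $\overline{z}$ block against $\Theta_4$, once on the $\beta_k^{0.5}$ entry against $\gamma_k$), precisely~\eqref{LMI2}.

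The main obstacle I expect is the bookkeeping in the congruence-plus-Schur step for~\eqref{LMI1}: one must be careful that the \emph{same} slack $W_k$ legitimately appears in three different diagonal slots (the $P_k$ block, the $R$-cost block, and the $\Phi_k$/$\theta$ block), since the error term $e_{k+h|k}$ and the state $x_{k+h|k}$ are multiplied by the \emph{same} gain $F_{k_t}$ in~\eqref{saturation_ET} — this is what lets the third column of $\Theta_2$ reuse $\mathscr{N}_k$ and forces the third diagonal entry of $\Theta_1$ to be $W_k + W_k^\top - Y_{3,k}$ rather than involving $W_{k,\tau_1}$. A secondary subtlety is handling the disturbance: the $-\varphi\|\omega\|_2^2$ term in~\eqref{condition_1} supplies the $\varphi\gamma_k I$ block in $\Theta_1$ (after the $\gamma_k$ scaling), and one should check that the sign works out so that the disturbance contributes \emph{positively} to the congruence-transformed inequality. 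Once these are pinned down, the remaining manipulations are routine, and I would close by noting that since the LMI~\eqref{LMI1} is affine in the vertex data, feasibility at all $v$ and all saturation vertices $\eta$ yields~\eqref{condition_1} for every admissible $(\alpha,\varrho,\omega)$, which together with~\eqref{condition_2} from~\eqref{LMI2} gives $J_\infty(k)\le V_{k|k}\le\gamma_k$ and the stated gain recovery formulas.
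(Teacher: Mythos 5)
Your proposal follows essentially the same route as the paper's proof: bound the $\beta$-increment via~\eqref{internal_adaptive_variable} and drop the $(\mu-1)\beta\le 0$ term, express~\eqref{condition_1} as a quadratic form in $(x,x_{-\tau_1},e,\omega)$ at the polytope and saturation vertices, apply the Schur complement, perform the $\gamma_k$-scaled congruence with slacks $W_k, W_{k,\tau_1}$, and use the relaxation $W^\top Y^{-1}W \ge W + W^\top - Y$ to obtain~\eqref{LMI1}, with~\eqref{LMI2} following from dividing~\eqref{condition_2} by $\gamma_k$ and two Schur complements. The only quibble is that after the relaxation the LMI is \emph{sufficient} for, not equivalent to, the vertex condition, but since the lemma only claims one direction this does not affect correctness.
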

\begin{proof}
Substituting the update law of $\beta_k$ from~\eqref{internal_adaptive_variable}, and noting that $\beta_k \ge 0$ for all $k \in \mathbb{N}$ and $\mu < 1$, we obtain
\begin{equation}
\begin{aligned}
    \beta_{k+h+1|k} - \beta_{k+h|k}
    = (\mu - 1) \beta_{k+h|k} - ( \|e_{k+h|k}\|_{\Phi_k}^2- \\
    \theta \|x_{k+h|k}\|_{\Phi_k}^2 ) \le -( \|e_{k+h|k}\|_{\Phi_k}^2 - \theta \|x_{k+h|k}\|_{\Phi_k}^2 ),
\end{aligned}
\label{beta2Phi}
\end{equation}
for all $h > 0$. Based on~\eqref{model2} and~\eqref{beta2Phi}, the left-hand side of the first condition in~\eqref{condition_1} can be reformulated as
\begin{equation}
\begin{aligned}
    &V_{k+h+1|k} - V_{k+h|k}
    = ( \|x_{k+h+1|k}\|_{P_k}^2 - \|x_{k+h|k}\|_{P_k}^2 )+ \\
    &( \|x_{k+h|k}\|_{P_{k,\tau_1}}^2 \!\!\!\!- \|x_{k+h-\tau_1|k}\|_{P_{k,\tau_1}}^2)\! +\! \beta_{k+h+1|k} \!- \!\beta_{k+h|k} \\
    &\le ( \|x_{k+h+1|k}\|_{P_k}^2 - \|x_{k+h|k}\|_{P_k}^2 )
    +    ( \|x_{k+h|k}\|_{P_{k,\tau_1}}^2-\\ 
    & \|x_{k+h-\tau_1|k}\|^2_{P_{k,\tau_1}}) -( \|e_{k+h|k}\|_{\Phi_k}^2 - \theta \|x_{k+h|k}\|_{\Phi_k}^2 ),
\end{aligned}
\label{proof1_eq2}
\end{equation}
where the term $\beta_{k+h|k}$ introduced in~\eqref{quadratic_function_2} incorporates the event-triggering parameter matrix $\Phi_k$ through~\eqref{beta2Phi}. Substituting~\eqref{proof1_eq2} into~\eqref{condition_1}, the inequality~\eqref{condition_1} can be equivalently rewritten as
\begin{equation}
\begin{aligned}
    (\|x_{k+h+1|k}\|_{P_k}^2 - \|x_{k+h|k}\|_{P_k}^2)
    + (\|x_{k+h|k}\|_{P_{k,\tau_1}}^2 -\\ \|x_{k+h-\tau_1|k}\|_{P_{k,\tau_1}}^2) 
 - ( \|e_{k+h|k}\|_{\Phi_k}^2 - \theta \|x_{k+h|k}\|_{\Phi_k}^2 )
    +\\ \|x_{k+h|k}\|_Q^2 + \|u_{k+h|k}\|_R^2 - \varphi \|\omega_{k+h|k}\|_2^2 \le 0.
\end{aligned}
\label{proof1_eq3}
\end{equation}
Using the system dynamics in~\eqref{model2} and the saturated input in~\eqref{saturation_ET}, we divide~\eqref{proof1_eq3} into three parts for separate analysis as following\begin{subequations}
\begin{align}
    & \quad \|x_{k+h+1|h}\|^2_{P_k} + \|x_{k+h|h}\|^2_{P_{k,\tau_1}} \notag \\
    &= \| (A^v + B^v \mathscr{M}_k) x_{k+h|h} 
    + \tilde{A}^v_1 x_{k+h - \tau_1|h}+ \notag \\
    &\quad  B^v \mathscr{M}_k e_{k+h|h} + D \omega_{k+h|h} \|^2_{P_k} 
    + \|x_{k+h|h}\|^2_{P_{k,\tau_1}} \notag \\
    &= z_{k+h|h}^\top \Lambda_1^\top \Upsilon_1 \Lambda_1 z_{k+h|h},
    \label{block_1} \\[0.5em]
    & \quad - ( \|x_{k+h}\|^2_{P_k} + \|x_{k+h - \tau_1}\|^2_{P_{k,\tau_1}} ) -\notag \\
    & \quad ( \|e_{k+h}\|^2_{\Phi_k} - \theta \|x_{k+h}\|^2_{\Phi_k} )
    - \varphi \|\omega_{k+h}\|^2_2 \notag \\
    &= z_{k+h|h}^\top \Upsilon_2 z_{k+h|h},
    \label{block_2} \\[0.5em]
    & \quad \|x_{k+h}\|^2_Q 
    + \|\mathscr{M}_k(e_{k+h} + x_{k+h})\|^2_R \notag \\
    &= z_{k+h|h}^\top \Lambda_2^\top \Lambda_2 z_{k+h|h},
    \label{block_3}
\end{align}
\end{subequations}
where 
\begin{align}
    z_{k+h|h} &= 
    \begin{bmatrix}
        x_{k+h|h}^\top,\; x_{k+h - \tau_1|h}^\top,\; e_{k+h|h}^\top,\; \omega_{k+h|h}^\top
    \end{bmatrix}^\top, \notag \\
    \Lambda_1 &=
    \begin{bmatrix}
        A^v + B^v \mathscr{M}_k & \tilde{A}^v_1 & B^v \mathscr{M}_k & D \\
        I & 0 & 0 & 0
    \end{bmatrix}, \notag \\
    \Lambda_2 &=
    \begin{bmatrix}
        Q^{0.5} & 0 & 0 & 0 \\
        R^{0.5} \mathscr{M}_k & 0 & R^{0.5} \mathscr{M}_k & 0
    \end{bmatrix}, \notag \\
    \Upsilon_1 &= \mathrm{blkdiag}\{ P_k,\; P_{k,\tau_1} \}, \notag \\
    \Upsilon_2 &= \mathrm{blkdiag}\{ -P_k + \theta \Phi_k,\; -P_{k,\tau_1},\; -\Phi_k,\; -\varphi I \}, \notag \\
    \mathscr{M}_k &= E_\eta F_k + E_\eta^{-} H_k, \quad v = 1, \ldots, L. \label{Mk}
\end{align}
The variable matrices $A(k)$, $\tilde{A}_1(k)$, $B(k)$, and $\sum_{\eta = 1}^{2^{n_u}} \varrho_\eta \mathscr{M}_k$ are substituted by their vertices $A^v$, $\tilde{A}_1^v$, $B^v$, and $\mathscr{M}_k$, respectively. With equations in~\eqref{block_1}, \eqref{block_2}, and~\eqref{block_3}, we can convert~\eqref{proof1_eq3} into
\begin{equation}
\begin{aligned}
    z_k^\top \Lambda_1^\top \Upsilon_1 \Lambda_1 z_k 
    + z_k^\top \Upsilon_2 z_k 
    + z_k^\top \Lambda_2^\top \Lambda_2 z_k 
    &= \\z_k^\top ( \Lambda_1^\top \Upsilon_1 \Lambda_1 
    + \Upsilon_2 
    + \Lambda_2^\top \Lambda_2 ) z_k 
    &\le 0,
\end{aligned}
\end{equation}
which is equivalent to
\begin{equation}
\begin{aligned}
 \Lambda_1^\top \Upsilon_1 \Lambda_1 
    + \Upsilon_2 
    + \Lambda_2^\top \Lambda_2  \le 0. \label{proof1_eq6}
\end{aligned}
\end{equation}
Applying the Schur complement, {\eqref{proof1_eq6}} results in
\begin{equation}
    \begin{bmatrix}
        \Xi_1 & \ast \\
        \Xi_2 & \Xi_3\\
    \end{bmatrix}\ge 0, \label{proof1_eq7}
\end{equation}
 where
\begin{align*}
    \Xi_1 &= \mathrm{blkdiag} \left\{
       P_k,\, P_{k,\tau_1},\, \Phi_k,\, \varphi I
    \right\}, \\
    \Xi_2 &= \begin{bmatrix}
        A^v + B^v \mathscr{M}_k & \tilde{A}^v_1 & B^v \mathscr{M}_k & D \\
        I                                   & 0                   & 0                       & 0 \\
        Q^{0.5}                             & 0                   & 0                       & 0 \\
        R^{0.5} \mathscr{M}_k               & 0                   & R^{0.5} \mathscr{M}_k   & 0 \\
        \theta^{0.5} I                      & 0                   & 0                       & 0
    \end{bmatrix}, \\
    \Xi_3 &= \mathrm{blkdiag} \left\{
       P_k^{-1},\, P_{k,\tau_1}^{-1},\, I,\, I,\, \Phi_k^{-1}
    \right\}.
\end{align*}
Pre-multiplying \eqref{proof1_eq7} by 
$\mathrm{blkdiag}\{ 
\gamma_k^{-0.5} W_k^\top, 
\gamma_k^{-0.5} W_{k,\tau_1}^\top,\allowbreak\ 
\gamma_k^{-0.5} W_k^\top,
\gamma_k^{0.5} I, \dots,
\gamma_k^{0.5} I 
\}$
and post-multiplying it by its transpose, the inequality in~\eqref{proof1_eq7} is transformed as
\begin{equation}
    \begin{bmatrix}
        \Xi_4 & \ast \\
        \Xi_5 & \gamma_k\Xi_3\\
    \end{bmatrix} \ge 0,\label{proof1_eq8}
\end{equation}
 where
\begin{align*}
    \Xi_4 &= \mathrm{blkdiag} \big\{ 
        \gamma_k^{-1} W_k^\top P_k W_k,\ 
        \gamma_k^{-1} W_{k,\tau_1}^\top P_{k,\tau_1} W_{k,\tau_1}, \\[-0.5ex]
        &\qquad\qquad 
        \gamma_k^{-1} W_k^\top \Phi_k W_k,\ 
        \varphi \gamma_k I 
    \big\}, \\
    \Xi_5 &= \begin{bmatrix}
        A^v W_k\! + \!B^v \mathscr{M}_k W_k & \!\!\!\tilde{A}^v_1 W_{k,\tau_1} \!\!\!\!\!\!& B^v \mathscr{M}_k W_k & \!\!\!\!\!\gamma_k D \\
        W_k                                         & 0                                & 0                            & 0 \\
        Q^{0.5} W_k                                 & 0                                & 0                            & 0 \\
        R^{0.5} \mathscr{M}_k W_k                   & 0                                & R^{0.5} \mathscr{M}_k W_k    & 0 \\
        \theta^{0.5} W_k                            & 0                                & 0                            & 0
    \end{bmatrix}.
\end{align*}
Notice $\gamma_{k}>0$ and $P_{k}>0$, it follows that $(W_{k} - \gamma_{k}P^{-1}_{k})^\top \gamma_{k}^{-1} P_{k} (W_{k} - \gamma_{k}P^{-1}_{k}) \ge 0$,
which can be equivalently rewritten as
\begin{equation}
	\gamma^{-1}_{k} W^\top _{k} P_{k} W_{k} \ge W_{k} + W^\top _{k} - \gamma_{k} P^{-1}_{k},
	\label{scaling1}
\end{equation}
Similarly, one has
\begin{equation}
	\gamma^{-1}_{k} W^\top_{k,\tau_1} P_{k,\tau_1} W_{k,\tau_1} \ge W_{k,\tau_1} + W^\top_{k,\tau_1} - \gamma_{k} P^{-1}_{k,\tau_1},
	\label{scaling2}
\end{equation}
\begin{equation}
	\gamma^{-1}_{k} W^\top_{k} {\Phi_k} W_{k} \ge W_{k} + W^\top_{k} - \gamma_{k} {\Phi_k}^{-1}.
	\label{scaling3}
\end{equation}
Substituting~\eqref{scaling1}, \eqref{scaling2} and~\eqref{scaling3} into~\eqref{proof1_eq8}, with the variable definition $\gamma_{k}P^{-1}_{k} = Y_{1,k}, \gamma_{k}P^{-1}_{k,\tau_1}= Y_{2,k}, \gamma_{k}{\Phi_k}^{-1}= Y_{3,k}, F_{k}W_{k} = Y_{4,k}$ and $H_{k}W_{k} = Y_{5,k}$, we yields~\eqref{LMI1}. 

By substituting the definition of the \aiping{Lyapunov-Krasovskii}-like function in~\eqref{quadratic_function_2} into condition~\eqref{condition_2}, we obtain
\begin{equation}
    V_{k|k} = \bar{z}^\top _{k|k} \bar{P}_{k} \bar{z}_{k|k} + \beta_{k|k} < \gamma_{k}. \label{proof1_eq9}
\end{equation}
Dividing both sides of~\eqref{proof1_eq9} by $\gamma_{k}^{-1}$, and applying the Schur complement together with the variable definition $\gamma_{k}P^{-1}_{k} = Y_{1,k}$ and $\gamma_{k}P^{-1}_{k,\tau_1} = Y_{2,k}$, we can obtain~\eqref{LMI2}.
\end{proof}

In Lemma~\ref{Lemma_3}, the LMIs~\eqref{LMI1} and~\eqref{LMI2}  satisfy the two conditions in~\eqref{condition_1} and~\eqref{condition_2}, respectively. Thorough this way, we force a upper bound $\gamma_k$ for the cost function $J_\infty(k)$ in~\eqref{cost}. We now introduce the following Lemma~\ref{Lemma_4}, which presents a new LMI for satisfying the constraint in~\eqref{H_LMI_orignal} for express saturation linearly in~\eqref{saturation}.
\begin{lemma}\label{Lemma_4}
    If there exist matrices $W_k$ and $Y_{5,k}$ such that the following LMI holds
    \begin{equation}
        \begin{bmatrix}
            u^2_{\sat} & v_{5,i}^\top \\
            \ast & W_k + W_k^\top - Y_{1,k}
        \end{bmatrix} \ge 0, i = 1, \ldots, n_u, \label{LMI3}
    \end{equation}
where $v_{5,i}^\top$ denotes the $i^{th}$ row of $Y_{5,k}$, then the convex hull condition for saturation in~\eqref{H_LMI_orignal} is satisfied.
\end{lemma}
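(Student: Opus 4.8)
The plan is to show that the LMI~\eqref{LMI3} is a sufficient condition for~\eqref{H_LMI_orignal}, i.e.\ that $|h_i^\top x| \le u_{\sat}$ for all $i = 1,\ldots,n_u$ and all $x$ in the relevant invariant ellipsoid. By Lemma~\ref{Lemma_1} (with $H_k$ in place of $F$), the constraint $|h_i^\top x| \le u_{\sat}$ holds on the set $\Omega(P_k,\gamma_k) = \{x : x^\top P_k x \le \gamma_k\}$ if and only if $h_i^\top (P_k/\gamma_k)^{-1} h_i \le u_{\sat}^2$, that is, $h_i^\top (\gamma_k P_k^{-1}) h_i \le u_{\sat}^2$. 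Since $x_{k|k}$ (and, by the forthcoming recursive-feasibility argument, all predicted states) lies in $\Omega(P_k,\gamma_k)$ by virtue of~\eqref{condition_2}/\eqref{LMI2}, it suffices to enforce this scalar quadratic inequality. So the first step is to reduce~\eqref{H_LMI_orignal} to $h_i^\top Y_{1,k} h_i \le u_{\sat}^2$ using the substitution $Y_{1,k} = \gamma_k P_k^{-1}$ already introduced in Lemma~\ref{Lemma_3}.

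Next I would rewrite $h_i^\top Y_{1,k} h_i$ in terms of the decision variables actually appearing in~\eqref{LMI3}. Recall from Lemma~\ref{Lemma_3} that $H_k = Y_{5,k} W_k^{-1}$, so $h_i^\top = v_{5,i}^\top W_k^{-1}$, where $v_{5,i}^\top$ is the $i$-th row of $Y_{5,k}$. Hence $h_i^\top Y_{1,k} h_i = v_{5,i}^\top W_k^{-1} Y_{1,k} W_k^{-\top} v_{5,i}$, and the constraint becomes $u_{\sat}^2 - v_{5,i}^\top W_k^{-1} Y_{1,k} W_k^{-\top} v_{5,i} \ge 0$. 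By the Schur complement this is equivalent to
\begin{equation*}
    \begin{bmatrix}
        u_{\sat}^2 & v_{5,i}^\top \\
        \ast & W_k^\top Y_{1,k}^{-1} W_k
    \end{bmatrix} \ge 0.
\end{equation*}
This is not yet an LMI because $W_k^\top Y_{1,k}^{-1} W_k$ is nonconvex in $(W_k, Y_{1,k})$, so the final step is the standard linearization: since $Y_{1,k} > 0$, the inequality $(W_k - Y_{1,k})^\top Y_{1,k}^{-1} (W_k - Y_{1,k}) \ge 0$ gives $W_k^\top Y_{1,k}^{-1} W_k \ge W_k + W_k^\top - Y_{1,k}$. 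Replacing the $(2,2)$ block by this lower bound only strengthens the matrix inequality, so~\eqref{LMI3} implies the Schur-complement form above, which in turn implies $|h_i^\top x| \le u_{\sat}$ on $\Omega(P_k,\gamma_k)$, and therefore the convex-hull representation~\eqref{saturation} of the saturation is valid.

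The main obstacle — really the only subtle point — is the logical placement of the ellipsoid-containment argument: one must be careful that the set on which $|h_i^\top x| \le u_{\sat}$ is required to hold is precisely $\Omega(P_k,\gamma_k)$, and that this set is invariant for the closed-loop predictions so that the bound holds for every $x_{k+h|k}$, not just $x_{k|k}$. This invariance is what connects~\eqref{LMI3} to the recursive feasibility argument developed later in the paper; within the proof of Lemma~\ref{Lemma_4} itself, I would simply invoke $x_{k|k} \in \Omega(P_k,\gamma_k)$ from~\eqref{condition_2} together with Lemma~\ref{Lemma_1}, and note that the congruence/linearization steps are routine. Everything else (the two Schur complements and the $W_k^\top Y_{1,k}^{-1} W_k \succeq W_k + W_k^\top - Y_{1,k}$ bound) is mechanical and identical in spirit to the manipulations already carried out in the proof of Lemma~\ref{Lemma_3}.
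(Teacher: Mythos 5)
Your proposal is correct and follows essentially the same route as the paper's proof: invoke Lemma~\ref{Lemma_1} to reduce~\eqref{H_LMI_orignal} to $h_{k,i}^\top Y_{1,k} h_{k,i} \le u_{\sat}^2$ with $Y_{1,k}=\gamma_k P_k^{-1}$, apply a Schur complement, perform the congruence with $W_k$ (you fold it into the substitution $h_{k,i}^\top = v_{5,i}^\top W_k^{-1}$ before the Schur step, the paper does it after---the two orderings are equivalent), and finish with the standard bound $W_k^\top Y_{1,k}^{-1} W_k \ge W_k + W_k^\top - Y_{1,k}$. Your added remark on the invariance of $\Omega(P_k,\gamma_k)$ is a sensible observation but not part of the paper's proof of this lemma.
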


\begin{proof}
    Based on Lemma~\ref{Lemma_1}, the condition in~\eqref{H_LMI_orignal} can be reformulated as $h_{k,i}^\top (\frac{P_k}{\gamma_k})^{-1} h_{k,i} \le u^2_{\sat}$. Since $Y_{1,k} = \gamma_k P_k^{-1}$, this inequality is equivalent to $ h_{k,i}^\top Y_{1,k} h_{k,i} \le u^2_{\sat}$. By applying the Schur complement, the inequality above is equivalent to the following LMI
    \begin{equation}
        \begin{bmatrix}
            u^2_{\sat} & h_{k,i}^\top \\
            h_{k,i} & Y_{1,k}^{-1}
        \end{bmatrix} \ge 0, \quad i = 1, \ldots, n_u. \label{sat_proof_eq1}
    \end{equation}
Next, pre-multiplying~\eqref{sat_proof_eq1} by $\operatorname{blkdiag}(1, W_k^\top)$ and post-multiplying by its transpose, with $v_{5,i}^\top = h_{k,i}^\top W_k$ and using $W_k^\top Y_{1,k}^{-1} W_k \ge W_k + W_k^\top - Y_{1,k}$, yields the LMI in~\eqref{LMI3}.
\end{proof}
\subsection{Invariant set} \label{Sec3.2}
Invariant sets play a critical role in the analysis of the recursive feasibility. In this section, we derive a condition that ensures the invariance of the closed-loop system in~\eqref{model2}. Let the solution to the optimization problem at time $k$ be denoted by $\{P_{k_t}^*, P_{k_t,\tau_1}^*, \gamma_{k_t}^*\}$. We define the invariant set based on this solution as
\begin{equation}
\Gamma_k \coloneqq \left\{ (\overline{z}, \beta) \mid \overline{z}^\top \overline{P}_{k_t}^* \overline{z} + \beta \le \gamma_{k_t}^* \right\}.
\label{invariank_set}
\end{equation}
We consider the following invariant set condition
\begin{equation}
			\begin{aligned}
&V_{k+h+1|k}-(1-\delta)V_{k+h|k} -\\ 
&\frac{\delta \gamma_{k}}{d^2}\|\omega_{k+h|k}\|^2_2\le0,\,0<\delta<1-\mu,\label{condition_3}
			\end{aligned}
\end{equation}
where $\delta$ is a constant scalar. The invariant set condition~\eqref{condition_3} together with~\eqref{condition_2} ensures that the set $\Gamma$ is an invariant set for the closed-loop system~\eqref{model2}. 
{To clarify the invariant set condition, we} consider the case $h = 0$ in~\eqref{condition_3}, which yields
\begin{equation}
\begin{aligned}
    V_{k+1|k} - (1 - \delta) V_{k|k} - 
    \frac{\delta\gamma_{k}}{d^2} \|\omega_{k|k}\|_2^2 \le 0. \label{Invariank_set_eq1}
\end{aligned}
\end{equation}
By combining~\eqref{condition_2} and~\eqref{disturbance_bound}, we can further derive{
\begin{equation}
\begin{aligned}
    V_{k+1|k} \le (1 - \delta) \gamma_k +\frac{\delta\gamma_k}{d^2} d^2 \le \gamma_k, \label{Invariank_set_eq2}
\end{aligned}
\end{equation}}which confirms that $V_{k+1|k} \le \gamma_k$ holds. By mathematical induction, we conclude that $V_{k+h|k} \le \gamma_k$ for all $h \ge 0$. In view of the definition of $V_{k+h|k}$ in~\eqref{quadratic_function_2}, this implies that the set $\Gamma$ is invariant for the closed-loop system~\eqref{model2}. Next, we derive an LMI condition to guarantee~\eqref{condition_3}, which is summarized in the following lemma.
\begin{lemma}\label{Lemma_5}
Consider the closed-loop system in~\eqref{model2}. Let $Q$, $R$, and $\varphi$ denote the matrices and scalar in~\eqref{cost_J}, and let  $\delta<1-\mu$ and $\theta$ be given positive scalar. If there exist matrices $W_k$, $W_{k,\tau_1}$, $Y_{4,k}$, $Y_{5,k}$, positive definite matrices $Y_{1,k}$, $Y_{2,k}$, $Y_{3,k}$, and a scalar $\gamma_k > 0$, satisfying~\eqref{LMI2} and 
    \begin{equation}
        \begin{bmatrix}
            \Theta_5 & \ast\\
            \Theta_6 & \Theta_7
        \end{bmatrix} \ge 0, \label{LMI4}
    \end{equation}
where
\begin{align*}
     \Theta_5 &= \operatorname{blkdiag} \left\{
        W_k + W_k^\top - Y_{1,k},\;
        W_{k,\tau_1} + W_{k,\tau_1}^\top - Y_{2,k},\right. \nonumber \\
        &\hspace{4em} \left. W_k + W_k^\top - Y_{3,k},\;
        \frac{\delta}{d^2} I
    \right\}, \\
    \Theta_6 &= \begin{bmatrix}
        A^v W_k + B^v \mathscr{N}_k & \tilde{A}_1^v W_{k,\tau_1} & B^v \mathscr{N}_k &  D \\
        W_k & 0 & 0 & 0 \\
        \theta^{0.5} W_k & 0 & 0 & 0
    \end{bmatrix}, \\
    \Theta_7 &= \operatorname{blkdiag} \left\{ Y_{1,k},\; Y_{2,k},\; Y_{3,k} \right\}, \\
\end{align*}with $\mathscr{N}_k$ defined in~\eqref{Nk}{, then} condition~\eqref{condition_3} holds, which is a sufficient condition for the set $\Gamma$ to be an invariant set for the closed-loop system in~\eqref{model2}. %defined by the solution
\end{lemma}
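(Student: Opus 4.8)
The plan is to mirror the structure of the proof of Lemma~\ref{Lemma_3}, since condition~\eqref{condition_3} is structurally just condition~\eqref{condition_1} with the stage cost $\|x\|_Q^2+\|u\|_R^2$ dropped, the discount factor $(1-\delta)$ replacing $1$ in front of $V_{k+h|k}$, and the disturbance penalty replaced by $\tfrac{\delta\gamma_k}{d^2}\|\omega\|_2^2$. First I would expand $V_{k+h+1|k}-(1-\delta)V_{k+h|k}$ using the definition~\eqref{quadratic_function_2}, giving the state term $\|x_{k+h+1|k}\|_{P_k}^2-(1-\delta)\|x_{k+h|k}\|_{P_k}^2$, the Krasovskii cross terms $\|x_{k+h|k}\|_{P_{k,\tau_1}}^2-(1-\delta)\|x_{k+h-\tau_1|k}\|_{P_{k,\tau_1}}^2$, and the adaptive-variable increment $\beta_{k+h+1|k}-(1-\delta)\beta_{k+h|k}$. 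For the last one I would use~\eqref{internal_adaptive_variable}: $\beta_{k+h+1|k}-(1-\delta)\beta_{k+h|k}=(\mu-1+\delta)\beta_{k+h|k}-(\|e_{k+h|k}\|_{\Phi_k}^2-\theta\|x_{k+h|k}\|_{\Phi_k}^2)$, and since $\delta<1-\mu$ forces $\mu-1+\delta<0$ and $\beta_{k+h|k}\ge0$, this is bounded above by $-(\|e_{k+h|k}\|_{\Phi_k}^2-\theta\|x_{k+h|k}\|_{\Phi_k}^2)$, exactly as in~\eqref{beta2Phi}. This is the one spot where the constraint $\delta<1-\mu$ is actually needed, and it is the only genuinely new idea relative to Lemma~\ref{Lemma_3}.

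Next I would substitute the closed-loop dynamics~\eqref{model2} with the convex-hull saturation representation~\eqref{saturation_ET}, writing everything as a quadratic form in $z_{k+h|h}=[x_{k+h|h}^\top,\,x_{k+h-\tau_1|h}^\top,\,e_{k+h|h}^\top,\,\omega_{k+h|h}^\top]^\top$, and replace the varying matrices $A(k),\tilde A_1(k),B(k),\sum_\eta\varrho_\eta\mathscr M_k$ by their vertices, exactly as in the block decomposition~\eqref{block_1}--\eqref{block_3}. The resulting matrix inequality is $\Lambda_1^\top\Upsilon_1\Lambda_1+\tilde\Upsilon_2\le0$, where $\Lambda_1$ is the same as in~\eqref{Mk} but now $\tilde\Upsilon_2=\operatorname{blkdiag}\{-(1-\delta)P_k+\theta\Phi_k,\,-(1-\delta)P_{k,\tau_1},\,-\Phi_k,\,-\tfrac{\delta\gamma_k}{d^2}I\}$ and there is no $\Lambda_2^\top\Lambda_2$ term. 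I would then apply the Schur complement to pull out $\Upsilon_1^{-1}=\operatorname{blkdiag}\{P_k^{-1},P_{k,\tau_1}^{-1}\}$ and $\Phi_k^{-1}$, obtaining a positive-semidefinite block matrix analogous to~\eqref{proof1_eq7} but with only three rows in $\Xi_2$ (the $x_{k+h+1}$-dynamics row, the $W_k$ row, and the $\theta^{0.5}$ row), which after congruence by the same scaling $\operatorname{blkdiag}\{\gamma_k^{-0.5}W_k^\top,\gamma_k^{-0.5}W_{k,\tau_1}^\top,\gamma_k^{-0.5}W_k^\top,I,\dots\}$ and the linearizing substitutions $Y_{1,k}=\gamma_kP_k^{-1}$, $Y_{2,k}=\gamma_kP_{k,\tau_1}^{-1}$, $Y_{3,k}=\gamma_k\Phi_k^{-1}$, $Y_{4,k}=F_kW_k$, $Y_{5,k}=H_kW_k$, together with the three completion-of-squares bounds~\eqref{scaling1}--\eqref{scaling3}, yields precisely~\eqref{LMI4}. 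One book-keeping point: since the new inequality has $\gamma_k$ multiplying $P_k$ inside $\Upsilon_1$ but the disturbance block carries $\tfrac{\delta\gamma_k}{d^2}I$, after dividing through by $\gamma_k$ the disturbance weight becomes $\tfrac{\delta}{d^2}I$ and the $D$ block is not scaled by $\gamma_k$ — this matches the entries of $\Theta_5$ and $\Theta_6$ as stated, so I would double-check the $\gamma_k$-powers in the congruence transform carefully here.

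Finally, I would note that~\eqref{LMI2} is exactly condition~\eqref{condition_2} (already proved in Lemma~\ref{Lemma_3}), and that the argument in~\eqref{Invariank_set_eq1}--\eqref{Invariank_set_eq2} shows~\eqref{condition_2} together with~\eqref{condition_3} implies $V_{k+h|k}\le\gamma_k$ for all $h\ge0$ by induction, hence $\Gamma$ in~\eqref{invariank_set} is invariant for~\eqref{model2}; this closes the proof. The main obstacle I anticipate is not conceptual but bookkeeping: tracking the $\gamma_k$ scalings through the Schur complement and congruence transformation so that the disturbance block lands as $\tfrac{\delta}{d^2}I$ with an unscaled $D$ (rather than $\varphi\gamma_k I$ with $\gamma_kD$ as in Lemma~\ref{Lemma_3}), and making sure the sign condition $\delta<1-\mu$ is invoked at exactly the right place to discard the $(\mu-1+\delta)\beta_{k+h|k}$ term. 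Everything else is a direct specialization of the computation already carried out for Lemma~\ref{Lemma_3}.
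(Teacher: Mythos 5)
Your proposal follows essentially the same route as the paper's proof: expand $V_{k+h+1|k}-(1-\delta)V_{k+h|k}$, reduce \eqref{condition_3} to a quadratic-form inequality in $z_{k+h|h}$ via the vertex substitution, apply the Schur complement, and perform the $\gamma_k$-scaled congruence together with the bounds \eqref{scaling1}--\eqref{scaling3} to arrive at \eqref{LMI4}, then combine with \eqref{LMI2} and the induction of \eqref{Invariank_set_eq1}--\eqref{Invariank_set_eq2} to conclude invariance of $\Gamma$. Your only local deviation, computing $\beta_{k+h+1|k}-(1-\delta)\beta_{k+h|k}=(\mu-1+\delta)\beta_{k+h|k}-(\|e_{k+h|k}\|_{\Phi_k}^2-\theta\|x_{k+h|k}\|_{\Phi_k}^2)$ and discarding the first term via $\delta<1-\mu$, is actually a cleaner use of that hypothesis than the paper's splitting of $\delta V_{k+h|k}$ in \eqref{proof3_eq2}, and your bookkeeping of the $\gamma_k$-powers (unscaled $D$ and the $\tfrac{\delta}{d^2}I$ block) matches $\Theta_5$ and $\Theta_6$ exactly.
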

\begin{proof}
% Substituting the update law of $\beta_k$ from~\eqref{internal_adaptive_variable} and noting that $\delta < 1 - \mu$, we obtain
% \begin{equation}
% \begin{aligned}
%     \beta_{k+h+1|k} - (1 - \delta)\beta_{k+h|k} 
%     = &(\mu + \delta - 1)\beta_{k+h|k} \\
%     - ( \|e_{k+h|k}\|^2_{\Phi_k}& - \theta \|x_{k+h|k}\|^2_{\Phi_k} ) \\
%     < - ( \|e_{k+h|k}\|^2_{\Phi_k}& - \theta \|x_{k+h|k}\|^2_{\Phi_k}),
% \end{aligned}
% \label{proof3_eq1}
% \end{equation}
Based on~\eqref{quadratic_function_2}, we have
\begin{equation}
    \delta V_{k|k} \le \delta ( \|x_{k+h|k}\|_{P_k}^2 + \|x_{k+h-\tau_1|k}\|_{P_{k,\tau_1}}^2 ).
    \label{proof3_eq2}
\end{equation}
By combining~\eqref{proof1_eq2} and~\eqref{proof3_eq2}, condition~\eqref{condition_3} can be reformulated as %,~\eqref{proof3_eq1}
\begin{equation}
\begin{aligned}
    &( \|x_{k+h+1|k}\|_{P_k}^2 - \|x_{k+h|k}\|_{P_k}^2 ) 
    + ( \|x_{k+h|k}\|_{P_{k,\tau_1}}^2 - \\
    &\|x_{k+h-\tau_1|k}\|_{P_{k,\tau_1}}^2 ) - ( \|e_{k+h|k}\|_{\Phi}^2 - \theta \|x_{k+h|k}\|_{\Phi}^2 ) -\\
   &  \frac{\delta \gamma_k}{d^2} \|\omega_{k+h|k}\|_2^2 
     + \delta ( \|x_{k+h|k}\|_{P_k}^2 + \|x_{k+h-\tau_1|k}\|_{P_{k,\tau_1}}^2 ) \le 0.
\end{aligned}
\label{proof3_eq3}
\end{equation}
We divide~\eqref{proof3_eq3} into two parts, i.e.,~\eqref{block_1} and 
\begin{equation}
\begin{aligned}
     & ( \delta-1)(\|x_{k+h|k}\|_{P_k}^2   - \|x_{k+h-\tau_1|k}\|_{P_{k,\tau_1}}^2)  -\\
     &( \|e_{k+h|k}\|_{\Phi}^2 - \theta \|x_{k+h|k}\|_{\Phi}^2 ) - \frac{\delta \gamma_k}{d^2} \|\omega_{k+h|k}\|_2^2 \\
     &= z_{k+h|h}^\top \Upsilon_3 z_{k+h|h},
\end{aligned}
\label{proof3_eq4}
\end{equation}
where $\Upsilon_3 = \mathrm{blkdiag}\{ ( \delta-1)P_k + \theta \Phi_k,\ ( \delta-1)P_{k,\tau_1},-\Phi_k,\ - \frac{\delta \gamma_k}{d^2} I \}$.
% \begin{align*}
%     \Upsilon_3 = \mathrm{diag}\{& ( \delta-1)P_k + \theta \Phi_k,\ ( \delta-1)P_{k,\tau_1},\\ 
%     &-\Phi_k,\ - \frac{\delta \gamma_k}{d^2} I \}.
% \end{align*}
Applying the Schur complement, \eqref{proof3_eq3} is equivalent to
\begin{equation}
    \begin{bmatrix}
        \Xi_6 & \ast \\
        \Xi_7 & \Xi_8\\
    \end{bmatrix}\ge 0, \label{proof3_eq5}
\end{equation}
 where
\begin{align*}
    \Xi_6 &= \mathrm{blkdiag} \left\{
       (1-\delta)P_k,\, (1-\delta)P_{k,\tau_1},\, \Phi_k,\, \frac{\delta \gamma_k}{d^2} I
    \right\}, \\
    \Xi_7 &= \begin{bmatrix}
        A^v + B^v \mathscr{M}_k & \tilde{A}^v_1 & B^v \mathscr{M}_k & D \\
        I                                   & 0                   & 0                       & 0 \\
        \theta^{0.5} I                      & 0                   & 0                       & 0\\
    \end{bmatrix}, \\
    \Xi_8 &= \mathrm{blkdiag} \left\{
       P_k^{-1},\, P_{k,\tau_1}^{-1},\, \, \Phi_k^{-1}
    \right\},
\end{align*}
with $\mathscr{M}_k$ defined in~\eqref{Mk}. Pre-multiplying \eqref{proof3_eq5} by 
$\mathrm{blkdiag}\{ 
\gamma_k^{-0.5} W_k^\top, 
\gamma_k^{-0.5} W_{k,\tau_1}^\top,
\gamma_k^{-0.5} W_k^\top,
\gamma_k^{-0.5} I, 
\gamma_k^{0.5} I ,\allowbreak\
\gamma_k^{0.5} I ,\allowbreak\
\gamma_k^{0.5} I 
\}$
and post-multiplying it by its transpose, and making use of~\eqref{scaling1},~\eqref{scaling2}, and~\eqref{scaling3}, we obtain the LMI condition~\eqref{LMI4}, which guarantees that~\eqref{condition_3} is satisfied. As in the analysis from~\eqref{Invariank_set_eq1} to~\eqref{Invariank_set_eq2}, this implies that $\Gamma$ is an invariant set defined by the solution for the closed-loop system in~\eqref{model2}.
\end{proof}

\subsection{Adaptive ETMPC algorithm} \label{Sec3.3}
In the previous section, we propose Lemmas~\ref{Lemma_3},~\ref{Lemma_4} and~\ref{Lemma_5}, which focus on 1) guaranteeing an upper bound for the infinite-horizon cost function $J_\infty(k)$, 2) reformulating the nonlinear saturated input as a convex combination of the vertices defined in~\eqref{saturation} and 3) ensuring the invariant set $\Gamma$, {respectively}. These results provide the foundation for our MPC design. Based on the above, the complete ETMPC optimization problem can be formulated as 
\begin{equation}
    \begin{aligned}
        &\min_{W_k, W_{k,\tau_1}, Y_{1,k}>0, Y_{2,k}>0, Y_{3,k}>0,Y_{4,k}, Y_{5,k}} \;  \gamma_k, \\
        &\text{subject to}~\eqref{LMI1},~\eqref{LMI2},~\eqref{LMI3}\text{ and}~\eqref{LMI4}.
    \end{aligned}\label{minimize_gamma}
\end{equation}

To clearly present the proposed control framework, we summarize the overall procedure of the adaptive ETMPC {algorithm} in Algorithm~\ref{ETMPC}. The algorithm iteratively checks the triggering condition. ETMPC solves the LMI-based optimization problem in~\eqref{minimize_gamma} only when necessary, thereby reducing the computational load.
\begin{algorithm} [h]
\caption{{Adaptive} Event-Triggered MPC}
\label{ETMPC}
\begin{algorithmic}[1]
\STATE \textbf{Initialization:}
Set the parameters $\{A^v, \tilde{A}_1^v, \dots, B^v\}_{v=1}^L$ in~\eqref{convex_hull}, weighting matrices $Q, R$ and scalar $\varphi$ in~\eqref{cost_J}.
Set the event-triggering parameters $\mu, \theta, \varepsilon$ satisfying~\eqref{para_constraints} and a constant scalar $0<\delta<1-\mu$.
Initialize the state $x_0$ and the internal adaptive variable $\beta_0 \ge 0$.

\FOR{$k = 0, 1, 2, \dots$}
    \STATE Obtain the current system state $x_k$ and calculate the error vector $e_k = x_{k_t} - x_k$.
        \IF{ $\varepsilon ( \|e_{k}\|_{\Phi_{k_t}} - \theta \|x_{k}\|_{\Phi_{k_t}} ) > \beta_{k}$ \OR $k = 0$ }
        \STATE Set the new trigger instant $k_t \leftarrow k$.
        \STATE Solve the optimization problem (\ref{minimize_gamma}) based on the current state $x_{k_t}$ to obtain the optimal solution set $\{\gamma_{k_t}^*, W_{k_t}^*, W_{k_t,\tau_1}^*, V_{1,k_t}^*, \dots, V_{5,k_t}^*\}$.
        \STATE Compute the controller $F_{k_t} \leftarrow V_{4,k_t}^* (W_{k_t}^*)^{-1}$ and the event-triggering matrix $\Phi_{k_t} \leftarrow \gamma_{k_t}^* (V_{3,k_t}^*)^{-1}$

        \STATE Compute the new control input $u_k \leftarrow F_{k_t} x_{k_t}$.
    \ELSE
        \STATE Maintain the previous control $u_k \leftarrow u_{k-1}$ (ZOH).
    \ENDIF
    \STATE Apply the control input $\sigma(u_k)$ to the plant (\ref{model1}) to obtain the next state $x_{k+1}$.
    \STATE Update the internal adaptive variable $\beta_{k+1}$ in~\eqref{internal_adaptive_variable}.
\ENDFOR
\end{algorithmic}
\end{algorithm}

We present the following theorem to establish that the proposed adaptive ETMPC algorithm guarantees recursive feasibility and satisfies the mean-square ISS. \begin{theorem}
Consider the system~\eqref{model1}. If the optimization problem~\eqref{minimize_gamma}, constrained by the LMIs in~\eqref{LMI1},~\eqref{LMI2},~\eqref{LMI3} and ~\eqref{LMI4}, admits a feasible solution at any time step $k$, then the closed-loop system~\eqref{model2} is guaranteed to be mean-square ISS with respect to the disturbance $\omega_k$.
\end{theorem}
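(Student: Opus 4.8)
The plan is to establish the theorem in two stages, following the standard Kothare-type argument adapted to the event-triggered, delayed, disturbed setting. The first stage is \emph{recursive feasibility}: I would show that whenever the optimization~\eqref{minimize_gamma} is feasible at a triggering instant $k_t$, the same solution (or a suitable scaling of it) remains feasible at the next triggering instant $k_{t+1}$. The key is the invariant-set construction of Section~\ref{Sec3.2}: Lemma~\ref{Lemma_5} guarantees that the set $\Gamma_{k_t}$ defined in~\eqref{invariank_set} is invariant for the closed-loop system~\eqref{model2} under the computed gains $F_{k_t}, H_{k_t}$ and triggering matrix $\Phi_{k_t}$. Hence for all $h\ge 0$ the predicted (and, by the ZOH hold, the actual) augmented state satisfies $\overline{z}_{k_t+h}^\top \overline{P}_{k_t}^* \overline{z}_{k_t+h} + \beta_{k_t+h} \le \gamma_{k_t}^*$. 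In particular at $h = k_{t+1}-k_t$ the state $\overline{z}_{k_{t+1}}$ together with $\beta_{k_{t+1}}$ lies in $\Gamma_{k_t}$, so the previous solution still satisfies~\eqref{LMI2}; since LMIs~\eqref{LMI1},~\eqref{LMI3},~\eqref{LMI4} do not depend on the state, they remain satisfied verbatim. This gives feasibility at $k_{t+1}$, and induction from the initial feasibility assumption closes the stage. A subtle point to handle carefully is that between triggers the control is frozen while the scheduling parameters $\alpha_v(k)$ and saturation weights $\varrho_\eta$ still vary; invariance must be argued robustly over all vertices, which is exactly what the vertex form of~\eqref{LMI4} provides.

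The second stage is \emph{mean-square ISS}. Here I would use the optimal cost $\gamma_{k_t}^*$, or more precisely the Lyapunov-Krasovskii-like function $V_k := \overline{z}_k^\top \overline{P}_{k_t}^* \overline{z}_k + \beta_k$ evaluated with the currently active parameters, as an ISS-Lyapunov function. From condition~\eqref{condition_3} (guaranteed by Lemma~\ref{Lemma_5}) applied along the true closed-loop trajectory with $h$ chosen to match the elapsed time since the last trigger, I get
\[
V_{k+1} \le (1-\delta) V_k + \frac{\delta \gamma_{k_t}^*}{d^2}\|\omega_k\|_2^2 .
\]
Because $\gamma_{k_t}^*$ is non-increasing across triggers (the warm-start solution is feasible, so the optimizer can only do better — this must be verified, and is the link between the two stages), one can bound $\gamma_{k_t}^* \le \gamma_0^*$ uniformly, and the recursion becomes a genuine contraction-plus-disturbance inequality in a single scalar. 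Iterating yields
\[
V_k \le (1-\delta)^k V_0 + \frac{\gamma_0^*}{d^2}\,\|\omega\|_\infty^2 \sum_{j\ge 0}(1-\delta)^j = (1-\delta)^k V_0 + \frac{\gamma_0^*}{d^2\delta}\,\|\omega\|_\infty^2 .
\]
Finally, using $\lambda_{\min}(\overline{P}_{k_t}^*)\,\|\overline{z}_k\|_2^2 \le \overline{z}_k^\top \overline{P}_{k_t}^* \overline{z}_k \le V_k$ (since $\beta_k\ge 0$), together with $V_0 = \overline{z}_0^\top \overline{P}_{0}^* \overline{z}_0 + \beta_0 \le \lambda_{\max}(\overline{P}_0^*)\|\overline{z}_0\|_2^2 + \beta_0$, I extract the bound
\[
\|\overline{z}_k\|_2^2 \le \frac{(1-\delta)^k}{\lambda_{\min}}\bigl(\lambda_{\max}\|\overline{z}_0\|_2^2 + \beta_0\bigr) + \frac{\gamma_0^*}{\lambda_{\min} d^2 \delta}\|\omega\|_\infty^2 ,
\]
which has exactly the form required by Definition~\ref{Definition_1} with $\psi_1(s,k) = C(1-\delta)^k s \in \mathcal{KL}$ and $\psi_2(s) = C' s \in \mathcal{K}$ (noting $\tilde z$ and $\overline z$ agree in the single-delay reduction used throughout Section~\ref{Sec3}). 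The ``mean-square'' qualifier is accommodated because $\beta_k$ and hence $V_k$ are deterministic here, but the same inequality holds in expectation if any stochastic element is present.

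The main obstacle I anticipate is the \emph{uniform boundedness of $\gamma_{k_t}^*$ and the monotonicity across triggering instants}. The clean contraction in the ISS step relies on being able to replace the time-varying $\gamma_{k_t}^*$ by a fixed constant; this needs the warm-start argument (previous solution feasible $\Rightarrow$ $\gamma_{k_{t+1}}^* \le \gamma_{k_t}^*$), which in turn needs the invariance of $\Gamma_{k_t}$ to correctly account for the frozen control input and the internal variable $\beta_k$ during the inter-event interval — in particular checking that~\eqref{LMI2} at $k_{t+1}$ uses the \emph{new} state but the \emph{old} $\overline{P}_{k_t}^*, \gamma_{k_t}^*$, which is precisely what membership in $\Gamma_{k_t}$ certifies. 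A secondary technical point is making sure condition~\eqref{condition_3}, derived for the predicted trajectory, transfers to the true trajectory: this is legitimate because at each triggering instant the prediction model is re-initialized at the measured state and the ZOH input coincides with the predicted input over the hold interval, so the first-step decrease $h=0$ of~\eqref{condition_3} applies at every real time step with the currently-active $\gamma_{k_t}^*$. Once these bookkeeping issues are dispatched, the remaining estimates are routine eigenvalue bounds.
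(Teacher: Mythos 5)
Your recursive-feasibility stage coincides with the paper's Step~1: both argue that only~\eqref{LMI2} depends on the measured state, and that invariance of $\Gamma_{k_t}$ (from~\eqref{LMI2} and~\eqref{LMI4}) certifies that the warm-started solution still satisfies~\eqref{LMI2} at $k_{t+1}$, robustly over all vertices, disturbances, and saturation weights. Your ISS stage, however, takes a genuinely different route. The paper works from the cost-decrease condition~\eqref{condition_1} (i.e.\ from~\eqref{LMI1}): it picks $a\in(0,1-\mu)$, folds the $\beta_k$ update~\eqref{internal_adaptive_variable} back into the Lyapunov difference to obtain $V_{k+1}-V_k\le-\varrho_1\|\xi_k\|_2^2+\varphi\|\omega_k\|_2^2$ with $\xi_k=[\overline z_k^\top,\ \beta_k^{0.5}]^\top$ and $\varrho_1=\min\{\lambda_{\min}(Q),a\}$, and then invokes Lemma~6 of~\cite{song2018n} to convert this dissipation inequality plus the sandwich bound $\varrho_2\|\xi_k\|_2^2\le V_k\le\varrho_3\|\xi_k\|_2^2$ into the $\mathcal{KL}/\mathcal{K}$ estimate. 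You instead iterate the invariant-set contraction~\eqref{condition_3} from~\eqref{LMI4} directly, $V_{k+1}\le(1-\delta)V_k+\tfrac{\delta\gamma_{k_t}^*}{d^2}\|\omega_k\|_2^2$, which is more elementary and yields an explicit exponential $\mathcal{KL}$ function without appealing to an external lemma; the price is that your argument leans harder on the monotonicity $\gamma_{k_{t+1}}^*\le\gamma_{k_t}^*$ and on the consistency of $V$ across the change of $\overline P^*$ at triggering instants, which you correctly flag as the main obstacle. Both routes share the same unaddressed technicality (uniform bounds on the eigenvalues of the time-varying $\overline P_{k_t}^*$ so that the final comparison functions are genuinely independent of $k$), so you are not worse off than the paper there. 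Two small corrections: your geometric sum should give $\tfrac{\delta\gamma_0^*}{d^2}\cdot\tfrac{1}{\delta}=\tfrac{\gamma_0^*}{d^2}$ rather than $\tfrac{\gamma_0^*}{d^2\delta}$ (your bound is still valid, just loose); and note that your contraction alone decays $V$, whereas the paper's route also absorbs $\beta_k$ into the decaying quantity via the $a\beta_k$ term, which is why the paper's $\psi_1$ naturally takes the argument $\|\overline z_0\|_2^2+\beta_0$ required by Definition~\ref{Definition_1} --- in your version this is recovered only because $V_0$ already contains $\beta_0$, which works but is worth stating explicitly.
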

\begin{proof}
    The proof consists of two steps.
    
Step 1 (Recursive Feasibility): 
Assume that the optimization problem~\eqref{minimize_gamma}, subject to the LMIs~\eqref{LMI1},~\eqref{LMI2},~\eqref{LMI3} and ~\eqref{LMI4}, is feasible at the triggering instant $k_t$. Let the corresponding optimal solution be denoted by 
$\{\gamma^*_{k_t}, W^*_{k_t}, W^*_{k_t,\tau_1}, V^*_{1,k_t}, V^*_{2,k_t}, V^*_{3,k_t}, V^*_{4,k_t}, V^*_{5,k_t}\}$.
At the subsequent triggering instant $k_{t+1}$, it is observed that only~\eqref{LMI2} explicitly depends on the updated state $x_{k_{t+1}}$. Therefore, to establish recursive feasibility, it suffices to verify that~\eqref{LMI2} remains satisfied at $k_{t+1}$. Specifically, at $k_{t+1}$,~\eqref{LMI2} can be rewritten as
	$x_{k_{t+1}}^\top V_{1,k_{t+1}}^{-1} x_{k_{t+1}} 
    + \sum_{\rho=1}^{\tau_1} x_{k_{t+1}-\rho}^\top V_{2,k_{t+1}}^{-1} x_{k_{t+1}-\rho}
    + \gamma_{k_{t+1}}^{-1} \beta_{k_{t+1}} < 1.$ 
We now consider a candidate solution at $k_{t+1}$ given by
\begin{equation}
    \{\gamma_{k_{t+1}}, \dots, V_{5,k_{t+1}}\} := \{\gamma^*_{k_t}, \dots, V^*_{5,k_t}\}. \label{proof4_eq1}
\end{equation}
Under this assignment, inequality~\eqref{LMI2} becomes
\begin{equation}
    \begin{aligned}
    	&x_{k_{t+1}}^\top {V^*_{1,k_t}}^{-1} x_{k_{t+1}}   + \\
    &\sum_{\rho=1}^{\tau_1} x_{k_{t+1}-\rho}^\top {V^*_{2,k_t}}^{-1} x_{k_{t+1}-\rho}
    + {\gamma^*_{k_t}}^{-1} \beta_{k_{t+1}} < 1. \label{proof4_eq2}
\end{aligned}
\end{equation}

Furthermore, the feasibility of~\eqref{LMI2} and~\eqref{LMI4} guarantees that the set $\Gamma$ is invariant for the closed-loop system~\eqref{model2}. This implies
    $x_{k_{t+1}|k_t}^\top P_{k_t} x_{k_{t+1}|k_t}
+ \sum_{\rho=1}^{\tau_1} x_{k_{t+1}-\rho|k_t}^\top P_{k_t,\tau_1} x_{k_{t+1}-\rho|k_t}
+ \beta_{k_t} < \gamma^*_{k_t}.$
Using the substitutions $\gamma_k P_k^{-1} = Y_{1,k}$ and $\gamma_k P_{k,\tau_1}^{-1} = Y_{2,k}$ from Lemma~\ref{Lemma_1}, this inequality becomes
\begin{equation}
    \begin{aligned}
            &x_{k_{t+1}|k_t}^\top {V^*_{1,k_t}}^{-1} x_{k_{t+1}|k_t}+ \\
&\sum_{\rho=1}^{\tau_1} x_{k_{t+1}-\rho|k_t}^\top {V^*_{2,k_t}}^{-1} x_{k_{t+1}-\rho|k_t}
+ {\gamma^*_{k_t}}^{-1} \beta_{k_t} < 1. \label{proof4_eq3}
    \end{aligned}
\end{equation}
Note that the predicted states $x_{k_{t+1}|k_t}$ and $x_{k_{t+1}-\rho|k_t}$ $(\rho=1,\dots,\tau_1)$ are computed based on the dynamics in~\eqref{model2}, which are affected by the disturbance $\omega_k$, the time-varying parameter $\alpha(k)$ from~\eqref{para_vary}, the uncertain parameter $\varrho_\eta$, and the uncertain error $e_k$. Since inequality~\eqref{proof4_eq3} holds for all admissible values of $\omega_k$, $\alpha(k)$, $\varrho_\eta$, and $e_k$, it implies that the actual state measurements $x_{k_{t+1}}$ and $x_{k_{t+1}-\rho}$ will also satisfy~\eqref{proof4_eq3}. Therefore, the inequality~\eqref{proof4_eq2} holds, meaning that~\eqref{LMI2} remains feasible at time $k_{t+1}$.
By the same reasoning, recursive feasibility is preserved at instants $k_{t+2}, k_{t+3}, \ldots$. This completes the proof that the LMIs~\eqref{LMI1},~\eqref{LMI2},~\eqref{LMI3}, and~\eqref{LMI4} are recursively feasible under the proposed adaptive ETMPC scheme.

Step 2 (Mean-Square ISS): The feasibility of \eqref{LMI1}, \eqref{LMI2}, \eqref{LMI3}, and \eqref{LMI4} further leads to inequality~\eqref{proof1_eq3}. In the following, we prove the mean-square ISS based on~\eqref{proof1_eq3}. 
For a constant $\mu$ in~\eqref{para_constraints}, there exists a parameter $a \in (0, 1 - \mu)$ such that $\mu - 1 + a < 0$. Combining this condition with~\eqref{proof1_eq3}, we obtain
\begin{equation}
\begin{aligned}
    &(\|x_{k+1|k}\|_{P_k}^2 - \|x_{k|k}\|_{P_k}^2)+ (\|x_{k|k}\|_{P_{k,\tau_1}}^2 -\\
    &\|x_{k-\tau_1|k}\|_{P_{k,\tau_1}}^2) - ( \|e_{k|k}\|_{\Phi}^2 - \theta \|x_{k|k}\|_{\Phi}^2 )   + \|x_{k|k}\|_Q^2\\ 
    &+ \|u_{k|k}\|_R^2 - \varphi \|\omega_{k|k}\|^2 +(\mu - 1 + a)\beta_k\le 0.
\end{aligned}
\label{proof4_eq4}
\end{equation}
Note that $(\mu - 1) \beta_k - ( \|e_{k|k}\|_{\Phi}^2 - \theta \|x_{k|k}\|_{\Phi}^2 ) - \beta_k = \beta_{k+1} - \beta_k,$ which leads to the following inequality
\begin{equation}
\begin{aligned}
  V_{k+1|k} - V_{k|k} +\|x_{k|k}\|_Q^2 + \|u_{k|k}\|_R^2\\ - \varphi \|\omega_{k|k}\|^2_2 +a\beta_k \le 0.
\end{aligned}
\label{proof4_eq5}
\end{equation}
Following a similar argument as from~\eqref{proof4_eq3} to~\eqref{proof4_eq2}, we extend~\eqref{proof4_eq5} to the actual system trajectory
\begin{equation}
\begin{aligned}
  V_{k+1} - V_{k} +\|x_{k}\|_Q^2 + \|u_{k}\|_R^2\\ - \varphi \|\omega_{k}\|^2_2 +a\beta_k \le 0.
\end{aligned}
\label{proof4_eq6}
\end{equation}
We define the augmented variable $\xi_k = \begin{bmatrix} \overline{z}_k^\top & \beta_k^{0.5} \end{bmatrix}^\top$, and $\varrho_1 = \min\{ \lambda_{\min}(Q), a \}$, where $\lambda_{\min}(\cdot)$ and $\lambda_{\max}(\cdot)$ denote the minimum and maximum eigenvalues of a matrix, respectively. Then, from~\eqref{proof4_eq6}, it follows that
\begin{equation}
\begin{aligned}
  V_{k+1} - V_{k} \le -\varrho_1 \|\xi_k\|_2^2 +  \varphi \|\omega_{k}\|_2^2.
\end{aligned}
\label{proof4_eq7}
\end{equation}

In addition, by the definition of $V_k$ in~\eqref{quadratic_function_2}, we have
\begin{equation}
\begin{aligned}
\lambda_{\min}(P_{k},P_{k,\tau_1})\|\overline{z}_{k}\|_2^2 + \beta_{k}
\leq V_{k} \leq\\ \lambda_{\max}(P_{k},P_{k,\tau_1})\|\overline{z}_{k}\|_2^2 + \beta_{k} .
\end{aligned}\label{proof4_eq8}
\end{equation}
This can be equivalently expressed as
\begin{equation}
    \varrho_2 \|\xi_k\|_2^2 \le V_k \le \varrho_3 \|\xi_k\|_2^2,
    \label{proof4_eq9}
\end{equation}
where $\varrho_2 = \min\{\lambda_{\min}(P_{k},P_{k,\tau_1}), 1\}$ and $\varrho_3 = \max\{\lambda_{\max}(P_{k},P_{k,\tau_1}), 1\}$. According to Lemma~$6$ in~\cite{song2018n}, inequalities~\eqref{proof4_eq7} and~\eqref{proof4_eq9} imply that
\[
\|\xi_k\|_2^2 \le \psi_1(\|\xi_0\|_2^2, k) + \psi_2(\|\omega_k\|_\infty^2),
\]
where $\psi_1 \in \mathcal{KL}$ and $\psi_2 \in \mathcal{K}$. Finally, note that $\|\overline{z}_k\|_2^2 \le \|\xi_k\|_2^2$ and $\|\xi_0\|_2^2 = \|\overline{z}_0\|_2^2 + \beta_0$. This directly implies that the closed-loop system in~\eqref{model2} satisfies the condition in Definition~\ref{Definition_1}, thus completing the proof of mean-square ISS.
\end{proof}
\vspace{-1.5ex}
\section{Simulation Example} \label{Sec4}
In this section, the proposed adaptive ETMPC {algorithm} is validated on an industrial electric heater used for thermal processing. The heater system has been previously studied in~\cite{chu1993time}. The process is partitioned into five zones, each governed by an individual control input. The objective is to regulate the temperature of the industrial heater to a desired setpoint in each zone (see Fig.~\ref{fig:heater_system}). 

\begin{figure} [htb]
	\centering
	\includegraphics[width=0.4\textwidth]{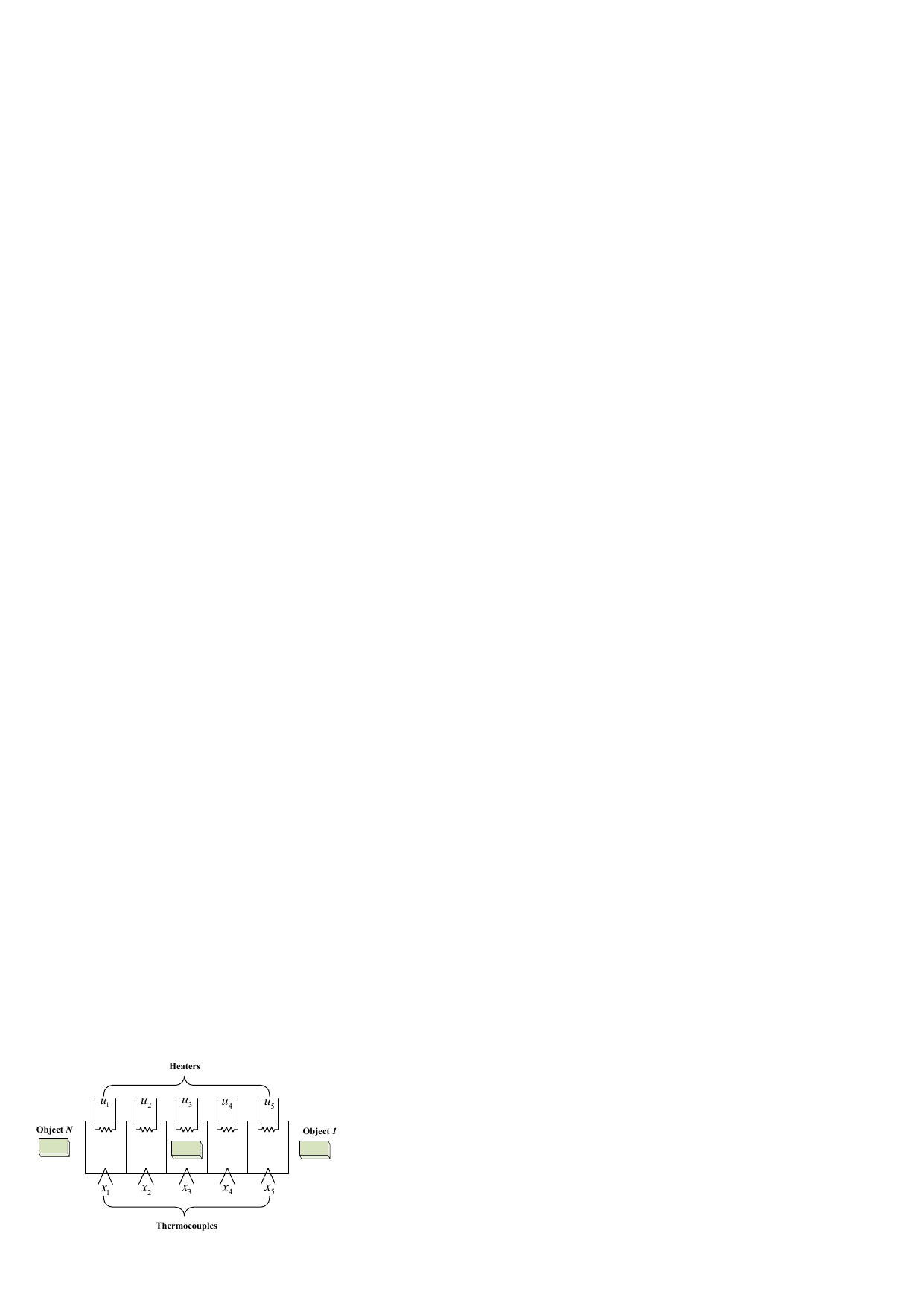}  
	\caption{Schematic diagram of the industrial electric heater.}
	\label{fig:heater_system}
\end{figure}

The system model follows the structure described in~\cite{chu1993time}, and is given by
\begin{equation}
    x_{k+1} = A(k) x_k + \tilde{A}_1(k) x_{k-\tau_1} + B(k) \sigma(u_k) + D \omega_k, \label{simulation_1}
\end{equation}
where the state and control input vectors are defined as
$x = \begin{bmatrix} \Delta T_1 & \Delta T_2 & \Delta T_3 & \Delta T_4 & \Delta T_5 \end{bmatrix}^\top$, 
$u = \begin{bmatrix} \Delta u_1 & \Delta u_2 & \Delta u_3 & \Delta u_4 & \Delta u_5 \end{bmatrix}^\top,$
with $\Delta T_i = T_i - \overline{T}_i$ and $\Delta u_i = u_i - \overline{u}_i$. Here, $T_i$ and $u_i$ ($i = 1, \ldots, 5$) denote the temperature and electric current in the $i$th zone, respectively. The values $\overline{T}_i$ and $\overline{u}_i$ are the steady-state operating points, given by
$\overline{T} = \begin{bmatrix} 696 & 747 & 774 & 774 & 731 \end{bmatrix}^\top $, 
$\overline{u} = \begin{bmatrix} 7.0 & 4.0 & 2.5 & 4.0 & 3.5 \end{bmatrix}^\top.$
The system has a time delay of $\tau_1 = 2$, and the control inputs are constrained as $-0.4 \le \Delta u_i \le 0.4$ for all $i = 1, \ldots, 5$. The nominal system matrices identified in~\cite{chu1993time}.
%\[
%A^1 = \scalebox{0.85}{$
%\begin{bmatrix}
%0.97421 & 0.15116 & 0.19667 & -0.05870 & 0.07144 \\
%-0.01455 & 0.88914 & 0.26953 & 0.11866 & -0.22047 \\
%0.06376 & 0.12056 & 1.00049 & -0.03491 & -0.02766 \\
%-0.05084 & 0.09254 & 0.28774 & 0.82569 & 0.02570 \\
%0.01723 & 0.01939 & 0.29285 & 0.03544 & 0.87111
%\end{bmatrix}
%$}, \]\[
%\tilde{A}^1 = \scalebox{0.85}{$
%\begin{bmatrix}
%-0.01000 & -0.08837 & -0.06989 & 0.18874 & 0.20505 \\
%0.02363 & 0.03384 & 0.05282 & -0.09906 & -0.00191 \\
%-0.04468 & -0.00798 & 0.05618 & 0.00157 & 0.03593 \\
%-0.04082 & 0.01153 & -0.07116 & 0.16472 & 0.00083 \\
%-0.02537 & 0.03878 & -0.04683 & 0.05665 & -0.03130
%\end{bmatrix}
%$},
%\]
%\[
%B^1 = \scalebox{0.85}{$
%\begin{bmatrix}
%0.53706 & -0.11185 & 0.09978 & 0.04652 & 0.25867 \\
%-0.51718 & 0.73519 & 0.57518 & 0.40668 & -0.12472 \\
%0.29469 & 0.31528 & 1.16420 & -0.29922 & 0.23883 \\
%-0.20191 & 0.19739 & 0.41686 & 0.66551 & 0.11366 \\
%-0.11835 & 0.16287 & 0.20378 & 0.23261 & 0.36525
%\end{bmatrix}
%$}.
%\]

{To introduce parameter variation, we use the first vertex $A^1, \tilde{A}^1$ and $B^1$ from the model settings in~\cite{chu1993time}, and define the second vertex as $A^2 = 0.9 A^1$, $\tilde{A}^2 = 0.9 \tilde{A}^1$, and $B^2 = B^1$.} The disturbance matrix is $D = 0.1 I$, and the additive disturbance is $\omega_k = 0.01[ \sin(k), \ldots, \sin(k)]^\top$.
The weighting matrices for the cost function are selected as
$Q = 0.01 \times \mathrm{diag}(0.25, 0.11, 1, 1, 0.5)$,  $R = 0.001 I$, 
with the remaining parameters set as $\varphi = 10$, $\theta = 0.1$, $\varepsilon = 1.12$, $\mu = 0.9$, $\delta = 0.09$, and $d^2 = 0.0018$, $\beta_0=10$.
The control objective is to steer the system from the initial condition $x_0 = [1.2\; 0.9\; 1.2\; 0.9\; 1.2]^\top$ to the origin, i.e., to drive the temperatures in all zones to their respective operating points, such that $T_i \to \overline{T}_i$.

\begin{figure}
    \centering
    \includegraphics[width=0.4\textwidth]{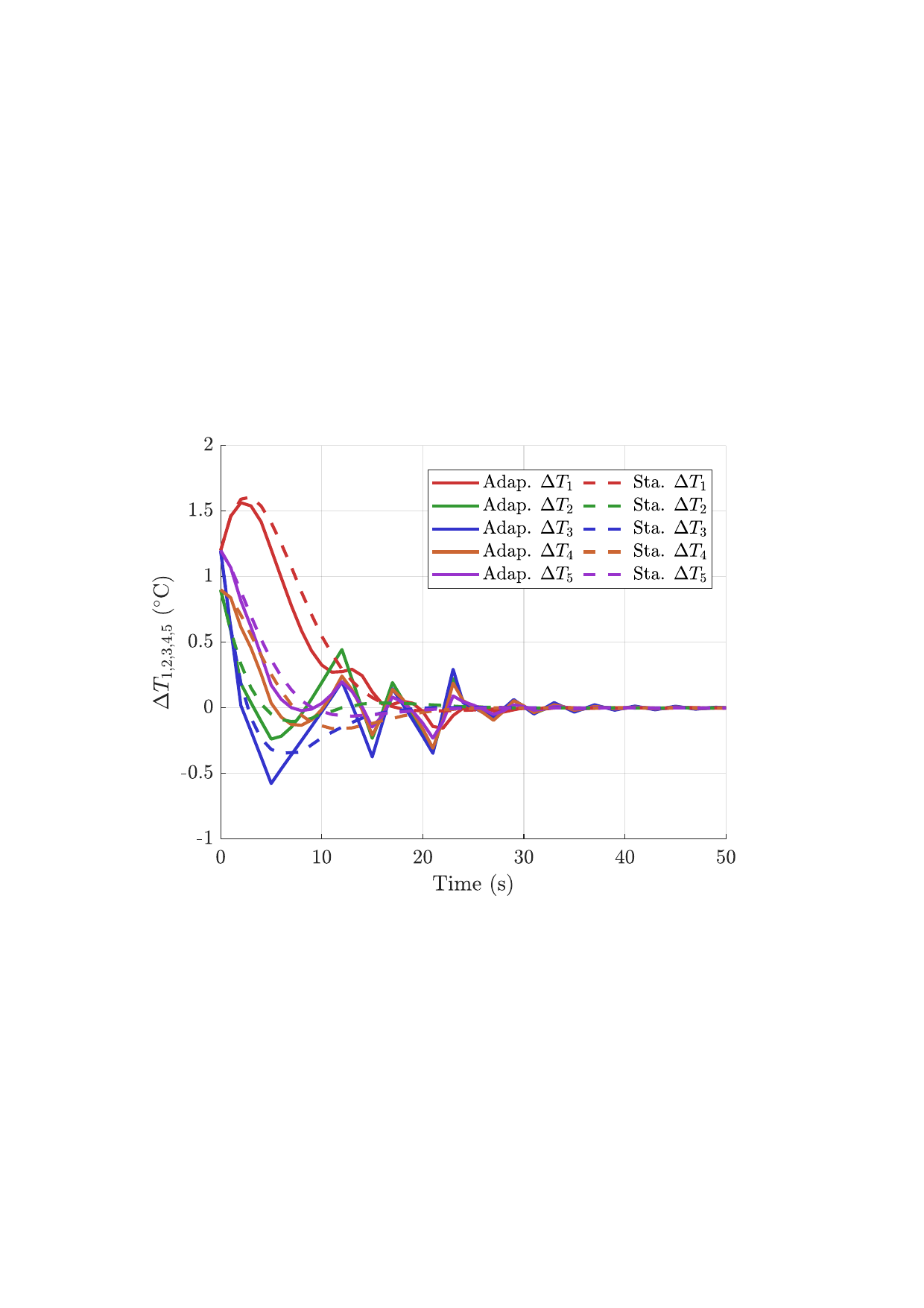}  
    \caption{State trajectories of the industrial heater under the proposed adaptive ETMPC~\eqref{adaptiveETM} and the static ETMPC~\eqref{staticETM}.}
    \label{fig:states}
\end{figure}

\begin{figure}
    \centering
    \includegraphics[width=0.4\textwidth]{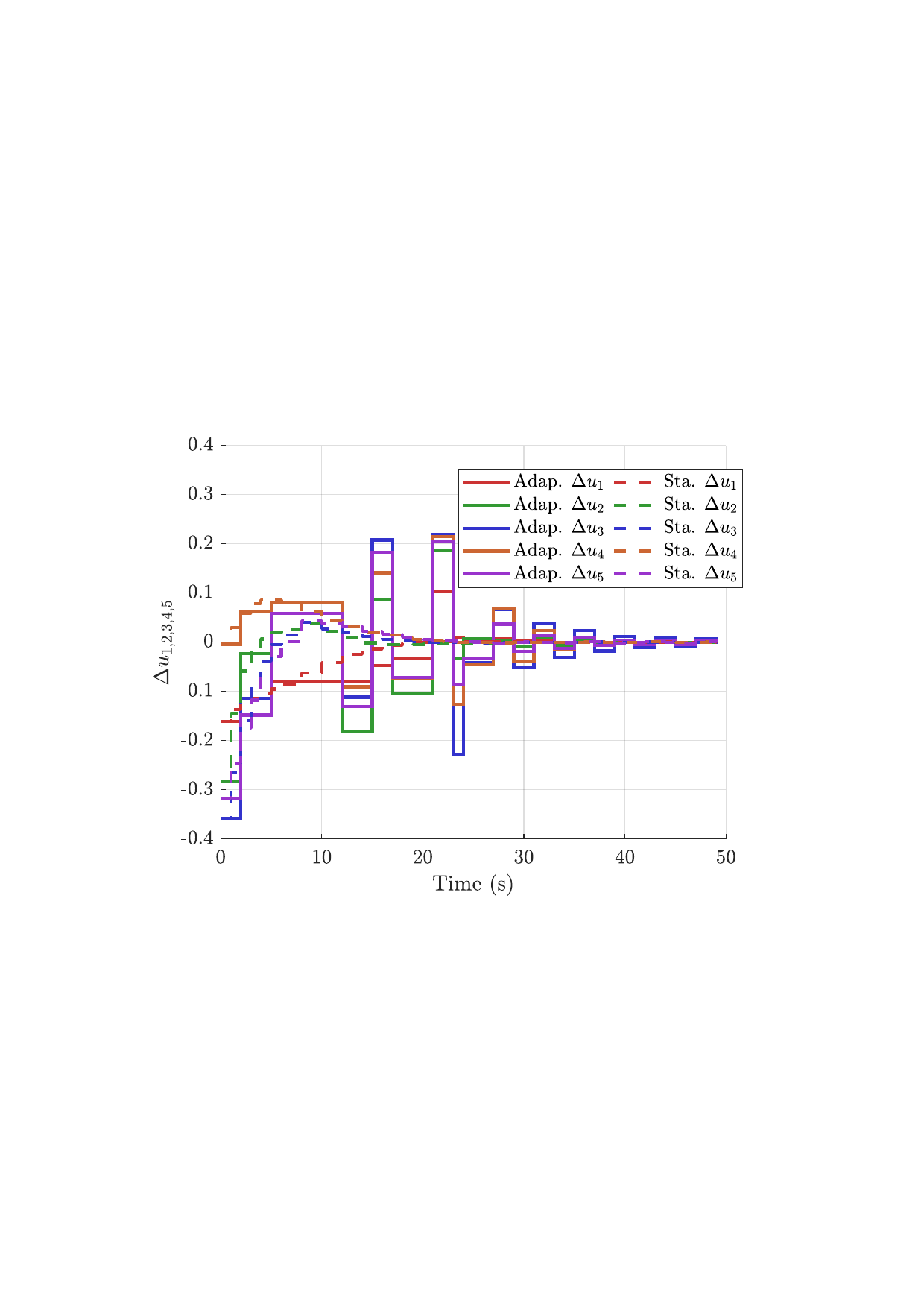}  
    \caption{Control input signals under adaptive ETMPC~\eqref{adaptiveETM} and static ETMPC~\eqref{staticETM}.}
    \label{fig:inputs}
\end{figure}

To evaluate the performance of the proposed adaptive ETMPC in~~\eqref{adaptiveETM} against the conventional static ETMPC in~\eqref{staticETM}, simulation experiments {are} conducted on the five-zone industrial heater system. The results are presented in Figs.~\ref{fig:states}–\ref{fig:trigger_beta_gamma} and Table~\ref{tab:trigger_stats}.
Fig.~\ref{fig:states} shows the evolution of the temperature deviation variables $\Delta T_1$ to $\Delta T_5$ under both control strategies. The steady-state criterion is defined by
\[
k_s = \min \left\{ k \in [0, k_f] \;\middle|\; \max_{j \ge k, i=1,\dots,n_x} |x_{i,j}| \leq \zeta \right\},
\]
where $k_f$ denotes the total simulation time and the threshold is set as $\zeta=0.05 $. The system under the static ETMPC reaches the steady-state at $k_s = 19.00$~s, whereas the adaptive ETMPC reaches it at $k_s = 30.00$~s. The result indicates that, although both controllers eventually drive the system state to equilibrium, the static ETMPC achieves faster convergence in terms of the steady-state criterion. This reflects a trade-off between convergence speed and control frequency. The adaptive ETMPC prioritizes communication efficiency, potentially at the cost of delayed convergence.

The control input trajectories $\Delta u_1$ to $\Delta u_5$ are illustrated in Fig.~\ref{fig:inputs}. Under the adaptive ETMPC, control updates occur less frequently, owing to the adaptive triggering condition in~\eqref{adaptiveETM} and the zero-order hold implementation. Conversely, the static ETMPC generates more frequent control signals. This may increase the computational load and cause excessive actuator switching, which can accelerate wear in practical systems.

\begin{figure}
	\centering
	\includegraphics[width=0.4\textwidth]{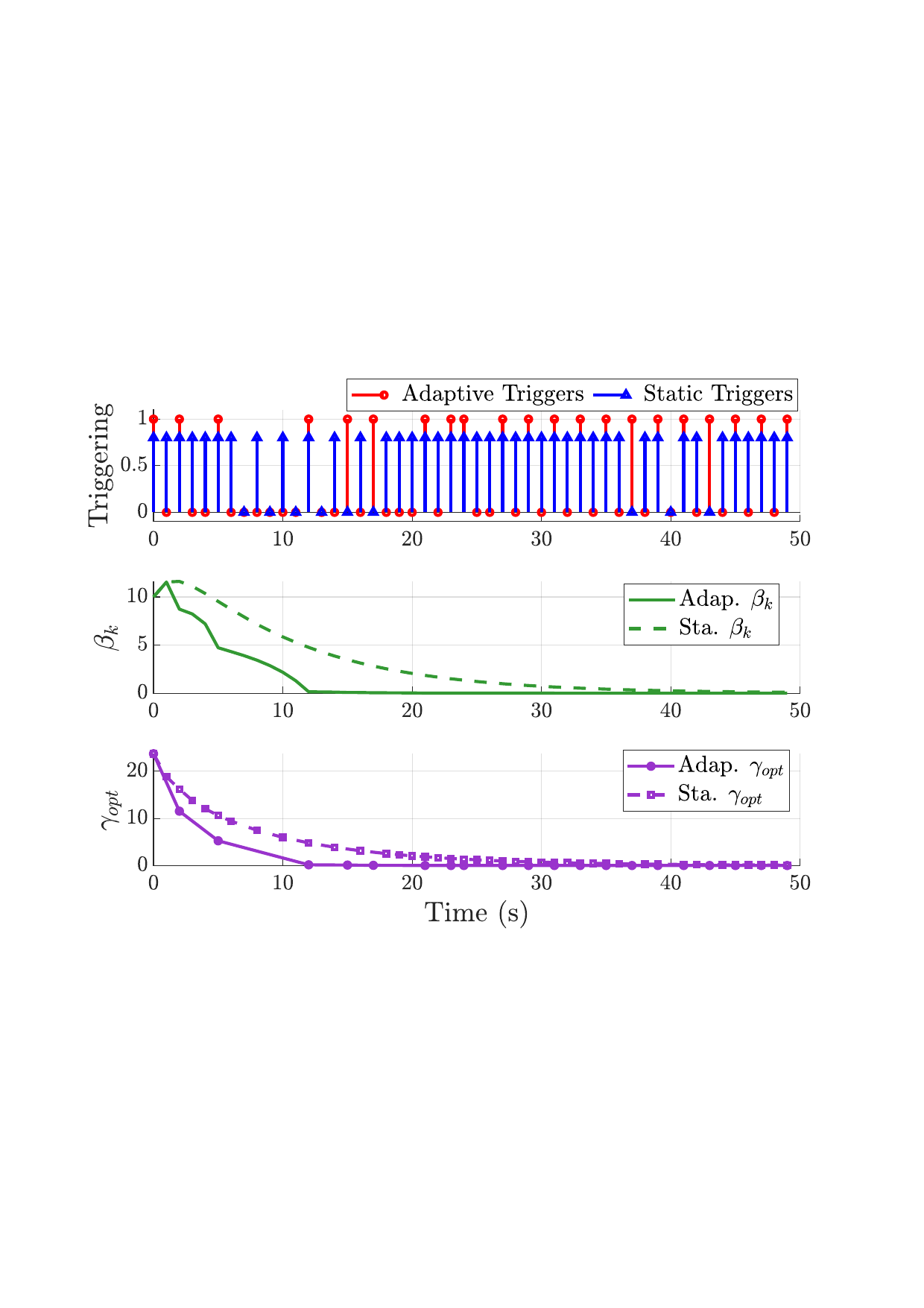}  
	\caption{Illustration of triggering instants (top), internal variable $\beta_k$ (middle), and optimal cost bound $\gamma_{\mathrm{opt}}$ (bottom) under the adaptive ETMPC~\eqref{adaptiveETM} and static ETMPC~\eqref{staticETM}.}
	\label{fig:trigger_beta_gamma}
\end{figure}

\begin{table}
    \centering
    \caption{Event-triggering statistics under $50$ simulation steps}
    \label{tab:trigger_stats}
    \resizebox{\linewidth}{!}{%
    \begin{tabular}{lcc}
        \toprule
        Metric & Adaptive ETMPC~\eqref{adaptiveETM} & Static ETMPC~\eqref{staticETM} \\
        \midrule
        Triggering ratio & $42.00\%$ & $82.00\%$ \\
        Average interval (steps) & $2.38$ & $1.22$ \\
        \bottomrule
    \end{tabular}%
    }
\end{table}

Fig.~\ref{fig:trigger_beta_gamma} further examines the internal behaviors of both schemes. The top subplot depicts the triggering instants over the simulation horizon, where the adaptive ETMPC leads to substantially fewer triggering events than the static ETMPC. The middle subplot shows the evolution of the internal variable $\beta_k$, which decreases adaptively and converges to zero. The bottom subplot presents the evolution of the optimal upper bound $\gamma_{\mathrm{opt}}$ for the infinite-horizon cost function. The adaptive ETMPC achieves a tighter and faster convergence of $\gamma_{\mathrm{opt}}$.
These observations are quantitatively supported by the data in Table~\ref{tab:trigger_stats}. Over the $50$-step simulation, the adaptive ETMPC achieves an event triggering ratio of only $42.00\%$, compared to $82.00\%$ for the static ETMPC. This leads to an average inter-trigger interval of $2.38$ steps, nearly twice that of the static counterpart, thus demonstrating the communication efficiency and reduced computational demand achieved by the proposed adaptive ETMPC {algorithm}.

\begin{figure}
    \centering
    \includegraphics[width=0.4\textwidth]{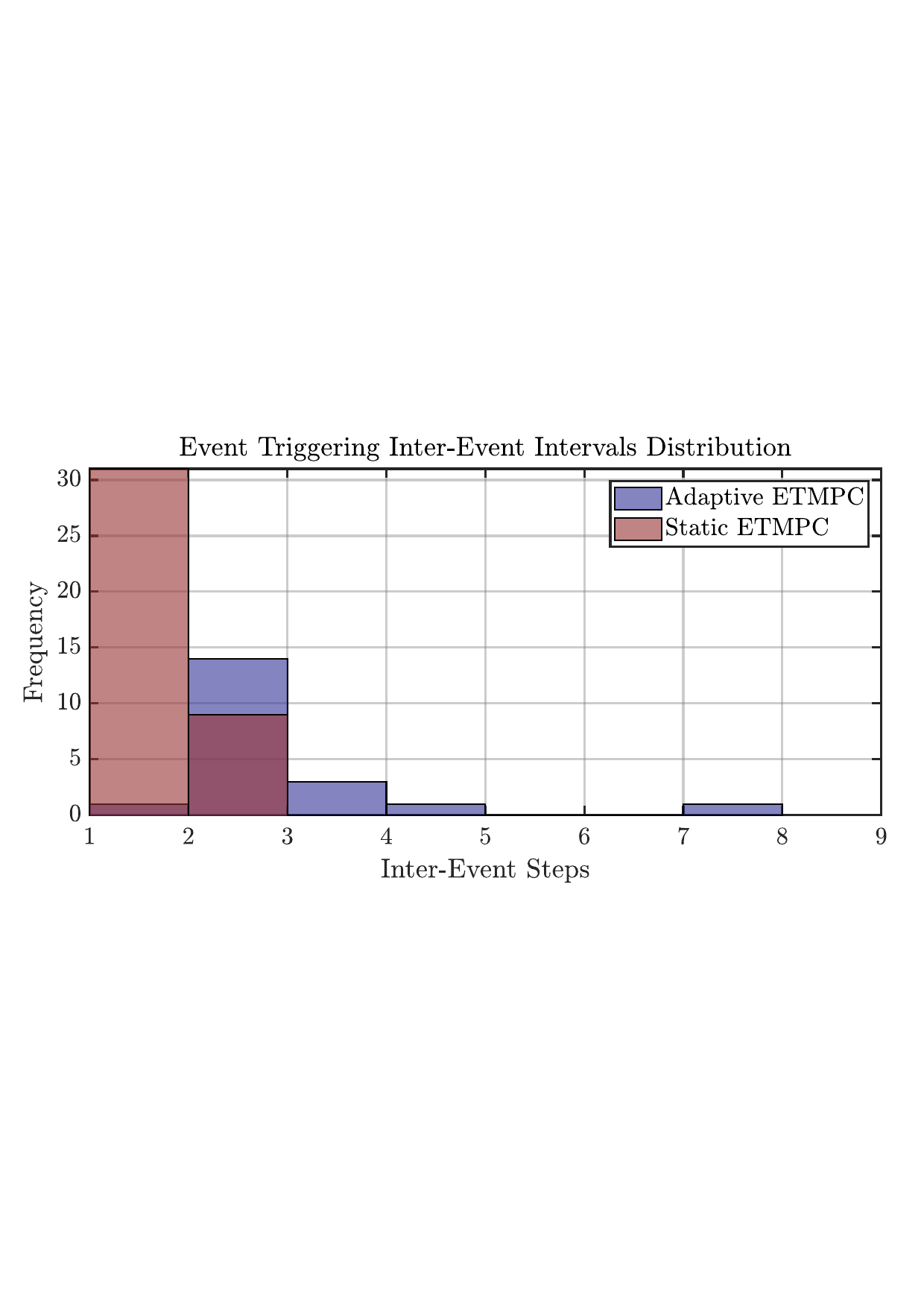}
    \caption{Histogram of inter-event intervals under the adaptive ETMPC~\eqref{adaptiveETM} and static ETMPC~\eqref{staticETM}.}
    \label{fig:histogram}
\end{figure}
\begin{figure}
    \centering
    \includegraphics[width=0.4\textwidth]{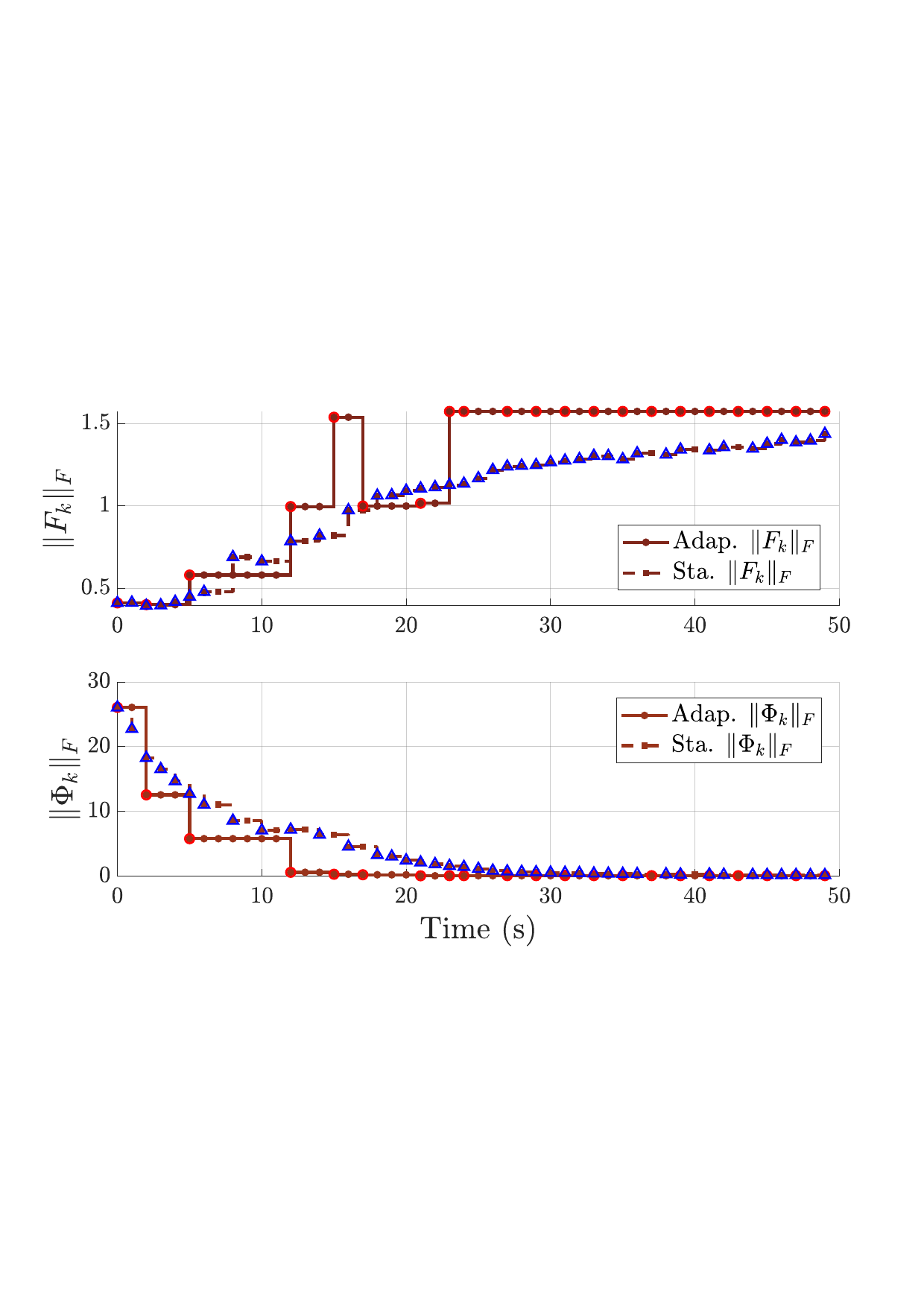}
\caption{Evolution of control gain norm $\|F_k\|_F$ (top) and triggering matrix norm $\|\Phi_k\|_F$ (bottom) for the adaptive ETMPC~\eqref{adaptiveETM} and static ETMPC~\eqref{staticETM}.}
    \label{fig:parameter_evolution}
\end{figure}

Figure~\ref{fig:histogram} presents the distribution of inter-event intervals under both control strategies. While Table~\ref{tab:trigger_stats} summarizes only the average inter-trigger interval, the histogram provides a more informative view of how those intervals are distributed. The static ETMPC exhibits a highly concentrated distribution around short intervals (e.g., 1 or 2 steps), indicating dense and frequent triggering. In contrast, the dynamic ETMPC produces a wider distribution, with a substantial number of longer intervals such as 3, 4, or even 7 steps. This evidences that the dynamic strategy not only reduces the average triggering frequency, but also promotes more frequent occurrences of long silent periods.
Figure~\ref{fig:parameter_evolution} illustrates the evolution of key optimized parameters throughout the control process, where $\|\cdot\|_F$ means the Frobenius norm. Since these parameters are updated only at triggering instants, the plots appear piecewise constant. Under the adaptive ETMPC, both $\|F_k\|_F$ and $\|\Phi_k\|_F$ exhibit more significant variations during the early stages. Notably, $\|F_k\|_F$ becomes stable after $t = 23$ and $\|\Phi_k\|_F$ after $t = 12$, both ahead of the stabilization observed in the static ETMPC. These observations indicate that the proposed method adaptively adjusts both the controller and the triggering mechanism in response to real-time system dynamics.

% \section{Conclusion} \label{Sec5}
% This paper presents a unified adaptive ETMPC framework designed for LPV systems subjected to the combined uncertainties of state delays, actuator saturation, and external disturbances. The core of the proposed method is a novel co-design methodology. By constructing a Lyapunov-like function that embeds the system state, state delays, and the internal adaptive variable, we enabled the joint optimization of the controller and trigger within a single LMI-based framework.
% We provide a rigorous proof to guarantee the recursive feasibility of the online optimization and the mean-square ISS of the closed-loop system. The effectiveness of the scheme is validated through a simulation study on an industrial electric heater, which confirmed its ability to significantly reduce communication load compared to conventional static ETM. 
% Future work could focus on extending this framework to broader classes of nonlinear systems and investigating its resilience against other network-induced phenomena, such as cyber-attacks.

\bibliographystyle{elsarticle-num}
\bibliography{Ref}

\end{document}